\documentclass[reqno]{amsart} 
\usepackage[caption=false]{subfig}

\textheight23cm
\textwidth16cm
\oddsidemargin0pt
\evensidemargin0pt
\voffset-1cm

\usepackage{graphicx}
\usepackage{enumitem}
\usepackage{array}
\usepackage{arydshln}
\usepackage[foot]{amsaddr}
\usepackage{ dsfont }
\usepackage{wrapfig}

\graphicspath{ {images/} }
\usepackage{ctable}
\usepackage{float}
\usepackage{comment} 
\usepackage{url}
\usepackage[titlenumbered,ruled]{algorithm2e}
\usepackage{upgreek} 

\usepackage{amssymb}
\usepackage{amstext}
\usepackage{amsthm}

\theoremstyle{plain}
\newtheorem{theorem}{Theorem}[section]
\newtheorem{lemma}[theorem]{Lemma}

\theoremstyle{definition}
\newtheorem{definition}[theorem]{Definition}

\theoremstyle{remark}

\newcommand*\dif{\mathop{}\!\mathrm{d}} 
\newcommand{\argmin}{\mathop{\mathrm{argmin}}} 
\newcommand{\prox}[1]{\text{prox}_{\mbox{\scriptsize\ensuremath{#1}}}} 
\newcommand{\proj}[1]{\text{proj}_{\mbox{\scriptsize\ensuremath{#1}}}} 
\newcommand{\sgn}[0]{~\normalfont \text{sgn}} 
\newcommand{\supp}[1]{~\normalfont \! \text{supp}(#1)} 

    
\usepackage{bbm}
\DeclareRobustCommand{\Nolla}{
  \text{\usefont{U}{bbold}{m}{n}0}} 
  
\DeclareMathOperator{\R}{\mathbb{R}}
\DeclareMathOperator{\N}{\mathbb{N}}
\DeclareMathOperator{\Z}{\mathbb{Z}}
\DeclareMathOperator{\diag}{\mathrm{diag}}

\renewcommand{\vec}[1]{\boldsymbol{#1}} 

\newcommand{\tp}{\vec{l}}
\renewcommand{\mp}{\vec{m}} 
\newcommand{\gen}{\psi} 
\newcommand{\cshear}{\psi_{a,s,\vec{l}}}
\newcommand{\Mas}{\vec{M}_{as}}
\newcommand{\SH}{\mathcal{SH}}
\DeclareMathOperator{\sh}{SH}
\DeclareMathOperator{\shD}{{\bf SH}}

\DeclareMathOperator{\Radon}{\mathcal{R}}
\DeclareMathOperator{\RadonD}{\vec{\mathcal{R}}}

\newcommand{\eg}{\emph{e.g.}}

\newcommand{\HL}[1]{\bf #1}

\begin{document}

\title[Sparse dynamic tomography]{Sparse dynamic tomography: A shearlet-based approach for iodine perfusion in plant stems}

\author[T.A.~Bubba]{Tatiana A.~Bubba}
\email{tatiana.bubba@helsinki.fi}

\author[T.~Heikkil\"{a}]{Tommi Heikkil\"{a}}
\email{tommi.heikkila@helsinki.fi}

\author[H.~Help]{Hanna Help}
\email{hanna.help@helsinki.fi}

\author[S.~Huotari]{Simo Huotari}
\email{simo.huotari@helsinki.fi}

\author[Y.~Salmon]{Yann Salmon}
\email{yann.salmon@helsinki.fi}

\author[S.~Siltanen]{Samuli Siltanen}
\email{samuli.siltanen@helsinki.fi}

\address[T.~A.~Bubba, T.~Heikkil\"{a}, S.~Siltanen]{Department of Mathematics and Statistics, University of Helsinki, 00014 Helsinki, Finland}
\address[T.~Heikkil\"{a}, H.~Help, S.~Huotari]{Department of Physics, University of Helsinki, 00014 Helsinki, Finland}
\address[H.~Help]{Finnish Food Authority, Finland}
\address[Y.~Salmon]{Institute for Atmospheric and Earth System Research, University of Helsinki, 00014 Helsinki, Finland}

\keywords{dynamic X-ray tomography, sparse-angle tomography, shearlets, variational reconstruction, phloem transport}

\begin{abstract}
In this paper we propose a motion-aware variational approach to reconstruct moving objects from sparse dynamic data. 
The motivation of this work stems from X-ray imaging of plants perfused with a liquid contrast agent, aimed at increasing the contrast of the images and studying the phloem transport in plants over time.

The key idea of our approach is to deploy 3D shearlets as a space-temporal prior, treating time as the third dimension. The rationale behind this model is that a continuous evolution of a cartoon-like object is well suited for the use of 3D shearlets.
We provide a basic mathematical analysis of the variational model for the image reconstruction. The numerical minimization is carried out with primal-dual scheme coupled with an automated choice of regularization parameter. We test our model on different measurement setups: a simulated phantom especially designed to resemble a plant stem, with spreading points to simulate a spreading contrast agent; a measured agarose gel phantom to demonstrate iodide diffusion and geometry prior to imaging living sample; a measured living tree grown \textit{in vitro} and perfused  with a liquid  sugar-iodine-mix. 
The results, compared against a 2D static model, show that our approach provides reconstructions that capture well the time dynamic of the contrast agent onset and are encouraging to develop microCT as a tool to study phloem transport using iodine tracer. 
\end{abstract}

\maketitle

\section{Introduction}


Computed tomography (CT) revolutionized medicine, nondestructive testing and security application ever since the first working machines were introduced in the early 60s~\cite{Cormack63}. The gist of CT is that it allows a non-invasive recovery of the internal structure of an object by recording the attenuation of the X-rays which are propagated through the object from multiple direction views. The inverse problem of reconstructing the X-rays attenuation (and thus the object structure) from the recorded measures, or projections, is well understood when the object is static and a full collections of projections (which corresponds to a dense sampling of the projection views or angles) is available~\cite{natterer2001,nattererwubbeling01}.

Even though multiple alternative methods, such as magnetic resonance imaging (MRI) and positron emission tomography (PET), have been introduced since CT appeared, CT is still the most commonly used method in many application areas. It is especially from applications that still arise a deluge of mathematical problems related to concrete practical issues in tomography: for instance, limited data problems where comprehensive data can not be measured or dynamic tomography where the target is non-stationary over time. In particular, in many biomedical applications, the target is non-stationary and, in general, projection measurements are time-dependent: namely, the target changes in time between the recording of two consecutive projection images. Such problems are (severely) \textit{ill-posed}, in the sense that the solution might not exist or might not be unique, or it might be very sensitive to possible noise on the measurements and modelling errors. As such, the solution of ill-posed inverse problems needs always complementing the insufficient data with some prior information which might be available on the data~\cite{Engl1996}.

The focus in this paper is precisely on dynamic tomography: we investigate the reconstruction of moving objects from scarcely sampled data. Motion can be periodic, like in the case of the beating heart, or non-periodic, as the flow of a contrast agent in the blood vessels. Regular CT devices are already used dynamically, but the motion is usually not taken into account during the reconstruction task, especially when the changes are non-periodic. Indeed, ad hoc solutions exist for very slow or periodic movements: for instance, a heart can be satisfactorily imaged using gating~\cite{Ritman03}, and there exists techniques based on motion compensation~\cite{Bonnet03, Roux04, Li05, Katsevich10, Hahn14, Hahn16}. In both cases, the result is the reconstruction of a static CT image from dynamic data. Moreover, many existing techniques for dynamic tomography are based on filtered back-projection (FBP) algorithms, which requires a dense angular sampling. 

The motivation for this study comes from imaging of phloem transport in plant stems. Phloem transport has an impact on the whole ecosystem function, playing a crucial role in the whole plant physiology and its ability to adapt to changing environmental conditions~\cite{savage2016, salmon2019}. Indeed, the export from plants leaves of assimilated Carbon (C) as non-structural carbohydrates (of which sucrose is the main component for transport) happens via the phloem and accounts for approximately 80\% of the total assimilated C~\cite{lemoine2013}.

However, phloem is a highly sensitive and reactive tissue making it challenging to study. Furthermore, unlike water transport in trees and plant hydraulics for which well-established methods exist (\eg{}, sap flow, water potential measurement with pressure chamber), no routine methods has been developed for measuring phloem transport. 
Over the years, phloem transport has been traditionally measured using stable isotope pulse-labelling~\cite{epron2015} or estimated using point-dendrometers~\cite{mencuccini2013}. Recently, there has been an increasing interest in using imaging methods to avoid destructive sampling, and several methods have been successfully tested: MRI~\cite{windt2006}, \textit{in vivo} confocal laser scanning microscopy~\cite{cayla2015}, fluorescent tracer dye~\cite{savage2013} and PET~\cite{hubeau2015}.
However, none of these methods is currently widely used due to different limitations such as too low resolution, the use of radioactive material, or costly and rare instruments. 

Because it uses nonradioactive tracers and offer high resolution image, X-ray microtomography offers an interesting alternative to those methods. Furthermore, desktop X-ray microtomography is becoming more and more available in plant ecophysiology laboratories and recent years have shown a tremendous interest in the method to investigate \textit{in vivo} plant water and carbon status \cite{earles2018, knipfer2017}.

The main challenge to overcome when studying phloem transport with benchtop microCT is that full scans are too slow to capture the movement of the tracer (\eg{}, iodine) in the imaging windows. One way of reducing both the radiation dose and the duration of the imaging is to lower the number of scanning angles. This leads to the dynamic undersampled problem studied in this paper.

Sparse dynamic CT is lately receiving increasing attention. In~\cite{Niemi15} the authors proposed a variant of the level set method for a moving target imaged by one or several source-detector pairs: in their setting, the number of projection directions is severely limited because of spatial constraints. In~\cite{burger2017} motion is estimated using the optical flow methodology: the authors propose a joint variational approach that incorporates physical motion models and regularization of motion vectors for the case of severly undersampled data. The dimension reduction Kalman filter for undersampled dynamic data is proposed in~\cite{Hakkarainen19}, again in the scenario of multi-source arrangements. 
While these recent contributions achieve remarkable results, their setting does not comply with the our set up. Multiple sources are usually non available for microCT and the optical flow methodology is based on the assumption of brightness constancy, which fails in the case of iodine spreading in plant stems.

The method we propose is a variational approach based on shearlets~\cite{Kutyniok2012}, a recently proposed directional-aware representation systems for multidimensional data, to regularize the ill-posedness deriving from the loss of information in the data and the ongoing motion. In particular, rather than reconstructing each 2D time frame independently by regularizing only in the object space, our regularization terms uses 3D shearlets to take into account the space-time dynamics, namely the third dimension accounts for the evolution in time, yielding a (2+1)-dimensional model.

For static tomography, $\ell^1$-priors have been widely investigated and were used for the first time on X-ray measurements with real data~\cite{Siltanen03,Siltanen03part2}; $\ell^1$ shearlet-based regularization has been used to investigate numerous limited data problems (\eg{},~\cite{Bubba19} and the references therein) but the use of shearlets for dynamic tomography has never been investigated before. An initial proof of concept for this approach was carried out by the authors in~\cite{Bubba17}, but in this paper we adopt a different minimization algorithm complemented with the sparsity-based \textit{automatic} choice for the regularization parameter, introduced in~\cite{purisha2017}, which ease the well-known problem of the choice of the regularization parameter. Also, we complement the numerical experiments section with a brief analysis of the proposed approach in the continuous setting, showing the existence of minimizers for the proposed functional.

The main focus of the paper is, however, the computational side. We provide a detailed comparison of possible results in different measurement setups: a simulated phantom especially designed to resemble a plant stem, with five spreading points to simulate a spreading contrast agent; a measured agarose gel phantom to demonstrate iodide diffusion and geometry prior to imaging living sample; a measured living tree grown \textit{in vitro} and perfused  with a liquid  sugar-iodine-mix. The physical targets have been measured at the University of Helsinki microCT laboratory with a Phoenix Nanotom S scanner~\cite{suuronen2014}. 

The numerical experiments are restricted to a 2D setup, which allows to gain good insight into the problem, but the theoretical framework can be extended to 3D data, the only limitation being, at the moment, computational. The results in this paper, compared against a 2D static model, show that our approach provides reconstructions that capture well the time dynamic of the iodine onset, which is crucial to study phloem transport. These results are encouraging to develop microCT as a tool to study phloem transport using iodine tracer.

The rest of this paper is organized as follows. In section~\ref{sec:ContTh} we introduce a motion-aware model and formulate the reconstruction procedure in a time continuous setting. In Section~\ref{sec:DiscrTh} we discuss the discretization of our model as well as practical issues to solve the optimization problem. The results of the proposed method, tested on for a simulated and two physical phantom, are then presented in Section~\ref{sec:ExpAndRes}. Finally, we draw some conclusions and we give an outlook to future work in Section~\ref{sec:concl}.

\section{A shearlet-based variational model for dynamic tomography}
\label{sec:ContTh}
We start by recalling a well-defined time-dependent version of the Radon transform, first introduced in~\cite{burger2017}, which serves as a model for our dynamic X-ray tomography data. Then, we give a brief introduction to shearlet systems and discuss their properties when used in  regularization. We will need this to introduce our proposed motion-aware shearlet-based variational model for dynamic tomography, which combines the time-dependent Radon transform with $\ell_1$ shearlet-based regularization. We end the section by proving the existences of minimizers for such a variational model.

\subsection{Dynamic tomography.}
 X-ray tomography is based on the measurable attenuation of the radiation intensity as it travels through an object. The object is usually irradiated from different directions of view (or angles) and the resulting data, recorded by a detector and usually called sinogram, can be interpreted as a collection of line integrals of an unknown attenuation function $f(x)$. Mathematically, this is modeled through the notion of the 2D Radon transform~\cite{natterer2001}:
\begin{equation}
(\Radon f) (\theta, r) = \int_{\mathcal{L}(\theta, r)} f(x) \dif x,
\label{eq:RadonTr}
\end{equation}
where $\mathcal{L}(\theta, r) = \{ x \in \R^2 | \, x_1 \cos(\theta) + x_2 \sin(\theta)= r\}$ denotes a parametrization of the ray lines with $(\theta, r) \in [0,2\pi) \times \R$ and $f(x)$ is defined on $\R^2$. In practice, if the changes in intensities are measured from a sufficiently high number of directions and positions, the Radon transform can be inverted and the object's attenuation values $f(x)$ can be determined in a robust way~\cite{nattererwubbeling01}.

Contrary to~\eqref{eq:RadonTr}, where the object is assumed static, in dynamic X-ray tomography the attenuation function $f(x, t)$ depends also on the time $t \geq 0$. Hence, it is necessary to consider a time-dependent definition for the Radon transform. Following the approach introduced in~\cite{burger2017}, let $I(t)$ be the set of given measurements at time $t$ and $\Omega$ the bounded support of $f$ defined by 
\[
\Omega(t) = \{ x \in \Omega \; | \; \exists \, (\theta,r) \in I(t) \; : \; x_1 \cos(\theta) + x_2 \sin(\theta) = r\}. 
\]
Let us further assume that the time interval is fixed and bounded by the end point $T > 0$. Then, the time-dependent 2D Radon transform is defined to be the operator:
\begin{equation}
\Radon_{I(t)} : \; L^q \big( \Omega \times [0,T] \big) \longrightarrow 
    L^q \big(\cup_{t \in [0,T]} I(t) \times \{ t \} \big),
\qquad 
(\Radon_{I(t)}f(x, t))(\theta, r) = \int_{\mathcal{L}(\theta, r)} f(x, t) \dif x.
\label{eq:TimeRadonTr}
\end{equation}
In~\cite[prop.~2.2]{burger2017} the authors show that~\eqref{eq:TimeRadonTr} is a well-defined bounded linear operator for any $q \in [1,\infty)$. 

Similarly to the static case, the time-dependent attenuation $f(x,t)$ can be uniquely determined when the full sinogram is available for \textit{all} possible lines. However, due to the nature of our application, we are interested in the case of sparse angle measurements, namely the parameter $\theta$ is severely undersampled. In addition, measurements are typically corrupted by noise. Namely, our inverse problem is to reconstruct $f(x,t)$ from the noisy measurements $y_t^{\eta}$ given by
\begin{equation}
y_t^{\eta} \; := \; y_t + \eta \; = \; \Radon_{I(t)}f  
\label{eq:ourIP}    
\end{equation}
where the noise $\eta$ is modelled as a Gaussian process. In such a case, the uniqueness of the solution $f$ is not guaranteed anymore. Regularization strategies allow to obtain a stable solution for problems of the form~\eqref{eq:ourIP}. A well established approach is to look for minimizers of the regularized functional 
\begin{equation}
\frac{1}{2} \| \Radon_{I(t)}f - y_t^{\eta} \|_2^2 + \alpha \mathcal{P}(f)
\label{eq:ReguGenFun}    
\end{equation}
where $\mathcal{P}(f)$ is a penalty term, promoting desired properties in the solution, and the regularization parameter $\alpha > 0$ expresses the trade-off between the two parts. In particular, sparsity-promoting penalties of the form
\[
\mathcal{P}(f) = \| (\langle f, \psi_{\xi} \rangle)_{\xi} \|_1
\]
allow to suppress noise while preserving discontinuities like edges, a governing feature in images. Among many possible choices for the sparsifying system $(\psi_{\xi})_{\xi}$, we are interested in shearlets, a representation system proven to be optimal in resolving discontinuities. Before going into the details of our proposed penalty, we give a brief introduction to shearlet systems in the next subsection.

\subsection{Shearlets}
Shearlets are representation systems specifically designed to provide optimally sparse approximations of a specific class of signals, cartoon-like images. Here, we give a concise overview of their main properties and the reader is referred to the cited literature for more details and precise formulations of the described results.

\subsubsection{The continuous shearlet transform}
Continuous shearlets are obtained by applying three different operations to a well chosen generator function $\gen \in L^2(\R^n)$ obtaining affine systems of the form:
\[
\Big\{  \cshear = |\det \Mas|^{1/2} \gen \left( \Mas (\cdot - \tp) \right) \; : \; \Mas \in G, \, \tp \in \R^n  \Big\},
\]
where $G$ is a subset of the group $GL_n(\R)$ of invertible $n \times n$ matrices, with $n=2$ or $n=3$. For instance, when $n=2$, $\Mas$ is given by the composite matrix
\[
\Mas = \vec{A}_a^{-1}\vec{S}_s^{-1} = 
    \begin{pmatrix} a & 0 \\ 0 & \sqrt{a} \end{pmatrix}^{-1} \begin{pmatrix} 1 & s \\ 0 & 1 \end{pmatrix}^{-1}.
\]
Then, the continuous shearlet transform is defined as the mapping
\[
L^2(\R^2) \ni f \, \longmapsto  \,  \SH_\gen f (a,s,\tp) = \langle f, \cshear \rangle.
\]
The shearlet parameters $(a,s,\tp) \in \R_+ \times \R^{n-1} \times \R^n$ control anisotropic scaling, shearing and translation, respectively. Thus, $\SH_\psi$ analyzes the function $f$ around the location $\tp$ at different resolutions and orientations encoded by the scale and shearing parameters $a$ and $s$, respectively.

Throughout this work, it is assumed that $\gen$ has compact support, yielding a compactly supported shearlet system, as implemented in the ShearLab package \cite{Kutyniok2016} used in the numerical experiments in section~\ref{sec:results}. The precise construction of such systems is of rather technical nature and we refer the reader to \cite{Kutyniok2012} for the details. Here we remark only that 
it has been proven that, under mild conditions, a large class of compactly supported generators yields shearlet frames with controllable frame bounds \cite{Kittipoom2012, Kutyniok2012a}, implying for instance that a reconstruction of $f$ from its shearlet transform is possible.

One of the main results in shearlets theory states that the continuous shearlet transform allows resolving the wavefront set of distributions by analyzing the decay properties of the continuous shearlet transform. \cite{kutyniok2009, Grohs2011}. Roughly, the wavefront set of $f$ is resolved by distinguishing different decay rates of its continuous shearlet transform.

We omit here the precise statement that is of rather technical nature, also because it relies on the definition of \textit{cone-adapted} shearlet transform. The name refers to the fact that, to circumvent the unwanted directional bias that appears when $s \rightarrow \infty$, the frequency domain is cone-like partitioned. We introduce cone-adapted shearlets for the discrete case (see next section), since the implementation provided in ShearLab 3D relies on it.

\subsubsection{The discrete shearlet transform} \label{sec:discreteSHtransform}
Discrete shearlet systems can be formally obtained by sampling the parameter space $\R_+ \times \R^{n-1} \times \R^n$ on an appropriate discrete subset. Namely, $(a,s,\tp) \in \R_+ \times \R^{n-1} \times \R^n$ is replaced by $(j,k,\mp) \in \Z_+ \times \Z^{n-1} \times \Z^n$. Furthermore, cone-adapted discrete shearlet systems are derived by restricting the range for the shearing variable $k$, to allow a more equal distribution of the orientations. We introduce the formal definition by treating separately the $n=2$ and $n=3$ cases. 

\begin{definition}[2D Cone-Adapted Discrete Shearlets]
\label{def:ShearletSystem2d}
Let $\varphi, \psi \in L^2(\R^2)$ and $c = (c_1,c_2) \in R_+^2$. The \emph{cone-adapted discrete shearlet
system} is defined by
\begin{equation*}
\sh_{\text{2D}} (\varphi,\psi; c) = \Phi(\varphi; c_1) \cup \Psi(\psi; c) \cup \widetilde{\Psi}(\widetilde{\psi}; c),
\end{equation*}
where
\begin{alignat*}{2}
\Phi(\varphi; c_1) & :=  \Big\{ \varphi_{\mp}  && := \varphi(\cdot - c_1\mp) :\mp \in \Z^2  \Big\},\\
\Psi(\psi; c) & := \Big\{ \psi_{j,k,\mp}  && := 2^{(3j)/4}\psi(\vec{S}_k \vec{A}_{2^j}\cdot - \vec{M}_c\mp ): j\in \N_0, k \in \Z, |k| \leq   2^{\left\lceil j/2 \right\rceil}, \mp \in \Z^2 \Big\},\\
\widetilde{\Psi}(\widetilde{\psi}; c) & :=\Big\{ \widetilde{\psi}_{j,k,\mp}  && := 2^{(3j)/4}\widetilde{\psi}(\vec{S}_k^T \vec{\widetilde{A}}_{2^j}\cdot - \widetilde{\vec{M}}_c \mp ): j\in \N_0, k \in \Z, |k| \leq
2^{\left\lceil j/2 \right\rceil}, \mp \in \Z^2 \Big\},
\end{alignat*}
with $\widetilde{\psi}(x_1,x_2) := \psi(x_2,x_1)$, $\vec{A}_{2^j} = \diag(2^{j},2^{j/2}) \in \R^{2 \times 2}$, $\vec{\widetilde{A}}_{2^j} = \diag(2^{j/2},2^{j}) \in \R^{2 \times 2}$, $\vec{M}_c = \diag(c_1, c_2)\in \R^{2 \times 2}$ and $\widetilde{\vec{M}}_c = \diag(c_2, c_1)\in \R^{2 \times 2}$. For ease of notation we introduce the index set 
\begin{align*}
\Lambda := \N_0 \times \{-2^{\left\lceil j/2 \right\rceil}, \ldots, 2^{\left\lceil j/2 \right\rceil} \} \times \Z^2.
\end{align*}
The \emph{cone-adapted discrete shearlet transform} is then defined as the mapping
\begin{equation*}
    L^2(\R^2) \ni  f \longmapsto \sh_{\gen,\varphi} f (m', (j,k,\mp),(\widetilde{j},\widetilde{k},\widetilde{\mp}))  = 
    \left( \langle f, \varphi_{m'} \rangle, \, \langle f,\gen_{j,k,\mp} \rangle, \, \langle f,\widetilde{\gen}_{\, \widetilde{j},\widetilde{k},\widetilde{\mp}} \rangle \right).
\end{equation*}
with $(m', (j,k,\mp),(\widetilde{j},\widetilde{k},\widetilde{\mp})) \in \Z^2 \times \Lambda \times \Lambda$.
\end{definition}
In the previous definition, $\varphi$ is referred to as the \emph{shearlet scaling function} and it is associated to the low frequency part, while the function $\gen$ is referred to as \emph{shearlet generator}. The corresponding systems $\Psi(\psi) $ and $\widetilde{\Psi}(\widetilde{\psi})$ essentially differ in the reversed roles of the input variables and therefore correspond to the horizontal and vertical conic region, respectively.

Definition \ref{def:ShearletSystem2d} can be extended quite naturally to the 3D case: the frequency domain is partitioned into three pairs of pyramids (described by $\Psi(\psi; c), \, \widetilde{\Psi}(\widetilde{\psi}; c)$ and $\breve{\Psi}(\breve{\psi})$) and a centered cube (described by $\Phi(\varphi; c_1)$). Once more, the partitioning of the frequency domain into pyramids allows to restrict the range of the shear parameters, which is the key to provide an almost uniform treatment of different directions in a sense of a good approximation to rotation.

\begin{definition}[3D Pyramid-Adapted Discrete Shearlets]
\label{def:ShearletSystem3d}
Let $\varphi, \psi \in L^2(\R^3)$ and $c = (c_1,c_2) \in \R_+^2$. The \emph{pyramid-adapted discrete shearlet
system} is defined by
\begin{equation*}
\sh_{\text{3D}} (\varphi,\psi; c) = \Phi(\varphi; c_1) \cup \Psi(\psi; c) \cup \widetilde{\Psi}(\widetilde{\psi}; c) \cup \breve{\Psi}(\breve{\psi}; c),
\end{equation*}
where
\begin{alignat*}{2}
\Phi(\varphi; c_1) & :=  \Big\{ \varphi_{\mp}  && := \varphi(\cdot - c_1\mp) :\mp \in \Z^3  \Big\},\\
\Psi(\psi; c) & := \Big\{ \psi_{j,k,\mp}  && := 2^j\psi(\vec{S}_k \vec{A}_{2^j}\cdot - \vec{M}_c\mp ): j\in \N_0, k \in \Z^2, |k| \leq   2^{\left\lceil j/2 \right\rceil}, \mp \in \Z^3 \Big\},\\
\widetilde{\Psi}(\widetilde{\psi}; c) & :=\Big\{ \widetilde{\psi}_{j,k,\mp}  && := 2^j\widetilde{\psi}(\vec{\widetilde{S}}_k \vec{\widetilde{A}}_{2^j}\cdot - \widetilde{\vec{M}}_c \mp ): j\in \N_0, k \in \Z^2, |k| \leq
2^{\left\lceil j/2 \right\rceil}, \mp \in \Z^3 \Big\}, \\
\breve{\Psi}(\breve{\psi}; c) & := \Big\{ \breve{\psi}_{j,k,\mp}  && := 2^j\breve{\psi}(\vec{\breve{S}}_k \vec{\breve{A}}_{2^j}\cdot - \vec{\breve{M}}_c\mp ): j\in \N_0, k \in \Z^2, |k| \leq   2^{\left\lceil j/2 \right\rceil}, \mp \in \Z^3 \Big\},
\end{alignat*}
with $\vec{A}_{2^j} = \diag(2^{j},2^{j/2}, 2^{j/2}) \in \R^{3 \times 3}$, $\vec{\widetilde{A}}_{2^j} = \diag(2^{j/2},2^{j}, 2^{j/2}) \in \R^{3 \times 3}$, $\vec{\breve{A}}_{2^j} = \diag(2^{j/2}, 2^{j/2},2^{j}) \in \R^{3 \times 3}$, $\vec{M}_c = \diag(c_1, c_2, c_2)\in \R^{3 \times 3}$,  $\widetilde{\vec{M}}_c = \diag(c_2, c_1, c_2)\in \R^{3 \times 3}$ and $\breve{\vec{M}}_c = \diag(c_2, c_2, c_1)\in \R^{3 \times 3}$. The shearing matrices are given by:
\[
\vec{S}_k = \begin{pmatrix} 1 & k_1 & k_2 \\ 0 & 1 & 0 \\ 0 & 0  & 1 \end{pmatrix},
\qquad
\vec{\widetilde{S}}_k = \begin{pmatrix} 1 & 0 & 0 \\ k_1 & 1 & k_2  \\ 0 & 0  & 1 \end{pmatrix}
\qquad \text{and} \qquad 
\vec{\breve{S}}_k = \begin{pmatrix} 1 & 0 & 0 \\ 0 & 1 & 0 \\ k_1 & k_2  & 1 \end{pmatrix}.
\]
The \emph{pyramid-adapted discrete shearlet transform} is then defined as the mapping
\begin{equation*}
    L^2(\R^3) \ni  f \longmapsto \sh_{\gen,\varphi} f (m', (j,k,\mp),(\widetilde{j},\widetilde{k},\widetilde{\mp}), (\breve{j},\breve{k},\breve{\mp}))  = 
    \left( \langle f, \varphi_{m'} \rangle, \, \langle f,\gen_{j,k,\mp} \rangle, \, \langle f,\widetilde{\gen}_{\, \widetilde{j},\widetilde{k},\widetilde{\mp}} \rangle, \langle f,\breve{\gen}_{\, \breve{j},\breve{k},\breve{\mp}} \rangle \right).
\end{equation*}
with $(m', (j,k,\mp),(\widetilde{j},\widetilde{k},\widetilde{\mp}), (\breve{j},\breve{k},\breve{\mp})) \in \Z^3 \times \Lambda \times \Lambda \times \Lambda$, where $\Lambda := \N_0 \times \{-2^{\left\lceil j/2 \right\rceil}, \ldots, 2^{\left\lceil j/2 \right\rceil} \}^2 \times \Z^3$.
\end{definition}

There are various extensions and refinements of these basic definitions available in the literature: we refer the interested reader to \cite{Kutyniok2012} and references therein. 

From Definition~\ref{def:ShearletSystem3d}, it is straightforward to show that the shearlet transform is a bounded linear operator. This result will be needed later on to prove the existence of minimizers for the proposed model.

\begin{lemma} \label{lemma:shearletBound}
Let $f \in L^q \left(\R^3\right)$ for $q > 4$ be a compactly supported function and $\varphi, \psi \in L^2\left(\R^3\right)$ compactly supported functions defining a pyramid-adapted discrete shearlet system in sense of Definition~\ref{def:ShearletSystem3d}. Then the $\ell^1$-norm of the shearlet transform
$\sh_{\gen,\varphi}: L^2 \left( \R^3 \right) \longrightarrow \ell^2(\Z^3 \times \Lambda \times \Lambda \times \Lambda)$ is bounded, namely:
\[
\|\sh_{\gen,\varphi} f \|_1 \; < \; C \|f\|_q (\|\varphi\|_p + \|\psi\|_p + \|\widetilde{\psi}\|_p + \|\breve{\psi}\|_p),
\]
for some $0 < C < \infty$ and $p < 4/3$ the H\"{o}lder conjugate of $q$.
\end{lemma}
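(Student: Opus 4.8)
The plan is to reduce everything to H\"older's inequality applied to each individual shearlet coefficient, and then to control the resulting sum over scales, shears and translations by exploiting the compact support of $f$ together with the scale-dependent decay of the $L^p$-norms of the analyzing elements. I would split $\|\sh_{\gen,\varphi}f\|_1$ into the four groups of coefficients associated with $\varphi$, $\gen$, $\widetilde{\gen}$ and $\breve{\gen}$; the three high-frequency groups differ only by a permutation of the coordinate roles, so it suffices to bound the $\varphi$-part and the $\gen$-part and then copy the estimate for the remaining two pyramids.

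For the low-frequency part, since both $f$ and $\varphi$ are compactly supported, only finitely many translates $\varphi_{\mp}=\varphi(\cdot - c_1\mp)$, $\mp\in\Z^3$, have support meeting that of $f$; denote their number by $N_\varphi$. As each translate satisfies $\|\varphi_{\mp}\|_p=\|\varphi\|_p$, H\"older's inequality with the conjugate exponent $p$ gives $|\langle f,\varphi_{\mp}\rangle|\le\|f\|_q\|\varphi\|_p$, hence $\sum_{\mp}|\langle f,\varphi_{\mp}\rangle|\le N_\varphi\|f\|_q\|\varphi\|_p$, which produces the first term of the claimed bound.

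For a single high-frequency element I would first record its exact $L^p$-norm. The change of variables $u=\vec{S}_k\vec{A}_{2^j}x-\vec{M}_c\mp$ has Jacobian $|\det(\vec{S}_k\vec{A}_{2^j})|=|\det\vec{A}_{2^j}|=2^{2j}$, since the shear is unimodular, so
\[
\|\shear\|_p = 2^{j}\,|\det\vec{A}_{2^j}|^{-1/p}\,\|\gen\|_p = 2^{j(1-2/p)}\|\gen\|_p .
\]
Because $q>4$ forces $p<4/3$, the exponent satisfies $1-2/p<-1/2$, so each element decays at least like $2^{-j/2}$ in $L^p$. H\"older then yields $|\langle f,\shear\rangle|\le 2^{j(1-2/p)}\|f\|_q\|\gen\|_p$, and I would sum this estimate over the admissible indices at each fixed scale: the restricted shear range $|k|\le 2^{\lceil j/2\rceil}$ with $k\in\Z^2$, and, for $f$ of fixed compact support, the finitely many translates $\mp$ whose sheared–dilated support meets that of $f$. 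The latter count is controlled through the covolume $2^{-2j}c_1c_2^2$ of the center lattice $(\vec{S}_k\vec{A}_{2^j})^{-1}\vec{M}_c\Z^3$ together with the shrinking, $\lesssim 2^{-j/2}$, diameter of the support of $\shear$.

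The crux of the argument, and the place where the hypothesis $q>4$ is genuinely used, is the summation over scales $j\in\N_0$. After pairing the per-scale count of essential coefficients against the decay factor $2^{j(1-2/p)}$, one is left with a geometric series in $j$; the negative exponent $1-2/p$ must be large enough in absolute value to overcome the growth of the number of contributing coefficients, and $p<4/3$ is exactly the threshold that renders the series summable. I expect the main obstacle to be precisely this bookkeeping: obtaining an honest, scale-uniform bound on how many translates and shears actually contribute at each scale, derived solely from the compact supports of $f$ and $\gen$ and from the restricted shear range, so that the geometric series closes with a finite constant. Once this is secured, the three pyramid contributions are summed verbatim with $\|\gen\|_p$ replaced by $\|\widetilde{\gen}\|_p$ and $\|\breve{\gen}\|_p$, the constant $C$ absorbs the per-group counts and the values of the convergent series, and adding the low-frequency term gives the stated inequality.
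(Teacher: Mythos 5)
Your outline follows the same route as the paper's own proof (H\"older on each coefficient, the exact scaling $\|\psi_{j,k,\mp}\|_p = 2^{j(1-2/p)}\|\psi\|_p$, a per-scale count of contributing indices, and a closing geometric series), and your low-frequency estimate and norm computation are correct. The genuine gap is the step you explicitly defer as ``bookkeeping'': it cannot be closed, and your own covolume observation shows why. The centers of the shearlets at scale $j$ and shear $k$ form the lattice $(\vec{S}_k\vec{A}_{2^j})^{-1}\vec{M}_c\Z^3$ of covolume $2^{-2j}c_1c_2^2$, while $\supp{f}$ is a \emph{fixed} compact set; hence the number of translates $\mp$ whose shearlet meets $\supp{f}$ grows like $2^{2j}$ and is not bounded in $j$. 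Likewise, since $k=(k_1,k_2)\in\Z^2$ with $|k|\le 2^{\lceil j/2\rceil}$, there are $\asymp 2^{j}$ shears per scale, not $\asymp 2^{j/2}$. The per-scale contribution is therefore $\asymp 2^{j}\cdot 2^{2j}\cdot 2^{j(1-2/p)}\|f\|_q\|\psi\|_p = 2^{j(4-2/p)}\|f\|_q\|\psi\|_p$, and since $p\ge 1$ forces $2/p\le 2$, the exponent is at least $2$: the series diverges for \emph{every} admissible $p$ (even granting the paper's one-dimensional shear count one gets $7/2-2/p\ge 3/2>0$). So your claim that $p<4/3$ is ``exactly the threshold that renders the series summable'' is false under the honest count; there is no threshold at all.

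For comparison, the paper's proof reaches the exponent $3j/2-2j/p$, and hence the condition $p<4/3$, only by treating the number of overlapping translates as a scale-independent constant $U$ and counting shears one-dimensionally --- the first of which is exactly what your covolume reasoning contradicts, since the translation lattice refines like $2^{-2j}$ while $\supp{f}$ stays fixed. The failure is structural rather than a matter of sharper counting: a pure H\"older bound sees no cancellation, and without cancellation the claimed inequality is not true. Indeed, take $\psi\ge 0$ a bump (admissible in Definition~\ref{def:ShearletSystem3d}, which only requires $\psi\in L^2$ compactly supported) and $f$ the indicator of a cube: every translate whose support lies inside the cube gives $|\langle f,\psi_{j,k,\mp}\rangle| = 2^{-j}\int\psi$, and there are $\asymp 2^{2j}$ such translates per shear, so $\|\sh_{\gen,\varphi}f\|_1=\infty$. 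Any correct argument for an $\ell^1$ bound must inject additional structure --- vanishing moments or smoothness of $\psi$ so that the $O(2^{2j})$ ``bulk'' coefficients vanish or decay, and/or a restricted class of functions $f$ --- rather than compact supports and H\"older alone.
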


\begin{proof}
Let's start by considering $\psi$. Since both $f \in L^q \left(\R^3\right)$ and $\psi \in L^2(\R^3)$ are compactly supported, there exists at most $U<\infty$ translations $T_{\mp}(x) := x - \mp$ such that $\supp{f} \cap \supp{\psi \circ T_{\mp}} \neq \emptyset$. Then, the support of a scaled shearlet $\psi_{j,k,\cdot}$ requires at most $U$ translations as well.
The shearing parameter is limited by the scale parameter $j$ and, therefore, those contribute for at most $2\cdot 2 \cdot 2^{\left\lceil j/2 \right\rceil} + 2$ terms.

Now, since $\psi \in L^2(\R^3)$ is compactly supported, we have that $\psi \in L^p\left(\R^3\right)$ for $p \leqslant 2$. Using the fact that translations do not alter $L^p$-norms, it is easy to check that:
\begin{equation*}
\|\psi_{j,k, \cdot}\|_p \; = \;
    \left(\int_{\R^3} |2^j \psi\left(\vec{S}_k \vec{A}_{2^j} x \right)|^p \dif x \right)^\frac{1}{p} \;
\; = \; 2^{j - \frac{2j}{p}} \|\psi\|_p.
\end{equation*}
by using the change of variables $y = \vec{S}_k \vec{A}_{2^j} x$ and  $|\det(\vec{A}_{2^j}^{-1} \vec{S}_k^{-1})| = |\det( \vec{S}_k \vec{A}_{2^j})|^{-1} = 2^{-2j}$. Then, by estimating the inner products with H\"{o}lder's inequality and combining the previous results we get:
\begin{align*}
\sum_{j=0}^\infty \; \sum_{|k| \leq 2^{\left\lceil j/2 \right\rceil}} \; \sum_{\mp \in \Z^3} |\langle f, \psi_{j,k,\mp} \rangle | 
&\; \leq \; \sum_{j=0}^\infty \; \sum_{|k| \leq 2^{\left\lceil j/2 \right\rceil}} \; \sum_{\mp \in \Z^3} \| f \|_q \|\psi_{j,k,\mp} \|_p \\ 
&\; \leq \; \sum_{j=0}^\infty \; \left( 2^{\lceil j/2 \rceil + 2} + 2 \right) U \,  2^{j - \frac{2j}{p}} \|f\|_q \|\psi\|_p \\
&\; < \; 8 U \|f\|_q \|\psi\|_p \sum_{j=0}^\infty \; 2^{\frac{3j}{2} - \frac{2j}{p}}  \\[0.1em]
&\; = \; C_1 \, \|f\|_q \, \|\psi\|_p,
\end{align*}
since $p < 4/3 \Rightarrow 3j/2 - 2j/p < 0$ and the geometric series converges.

Given how the different directions are defined, the same calculations can be replicated for $\widetilde{\psi}$ and  $\breve{\psi}$, respectively, in place of $\psi$. Finally, similar calculations can be done for the scaling function $\varphi$. Indeed, $\varphi \in L^p(\R^3)$ since it is compactly supported and, similarly to $\psi$, it overlaps the support of $f$ for at most $U'$ translations. Then, it easily follows that:
\[
\sum_{\mp \in \Z^3} |\langle f, \varphi(\cdot - \mp ) \rangle | \leqslant U' \|f\|_q \|\varphi\|_p.
\]
By simply combining the previous estimates, we get:
\[
\|\sh_{\gen,\varphi} f \|_1 \; < \; C \|f\|_q (\|\varphi\|_p + \|\psi\|_p + \|\widetilde{\psi}\|_p + \|\breve{\psi}\|_p),
\]
for some $0 < C < \infty$ depending on the support of $f$.
\end{proof}

Observe that, in particular, the previous lemma implies that $\|\sh_{\gen,\varphi} \cdot \|_1 : L^q(\Omega) \longrightarrow \R$ is a continuous sublinear functional for a compact set $\Omega \subset \R^3$.

We conclude this excursion into discrete shearlet theory by pointing out another fundamental result: shearlets are an optimal sparsifying transform for the class $\mathcal{E}^2(\R^n)$ of cartoon-like images. Namely, it can be proven that 
\[
\| f - f_D \|^2 \lesssim D^{-\frac{2}{n-1}}
\qquad \text{ for all } \quad f \in \mathcal{E}^2(\R^n),
\]
where $f_D$ denotes the $D$-term approximation using the $D$ largest shearlets coefficients. The precise statements can be found in \cite{kutyniokch5}. The previous inequality reveals that shearlets allow to attain what is proven to be the optimal rate for cartoon-like functions \cite{Donoho2001}. The same rate is also achieved by other anisotropic systems such as curvelets \cite{Candes2002}, but out of reach for isotropic wavelet systems \cite{Mallat2009}. An intuitive explanation for this is that the anisotropic scaling and the shearing allow to capture geometric features more efficiently than isotropic wavelet systems.

\subsection{A Shearlet-based motion-aware model}
In our particular problem, we are interested in tracking down iodine spreading into a plant's stem. This means, in particular, that the changes over time only affect limited sections of the target: the static background and, even more importantly, edges are not affected by motion and can be preserved by choosing an appropriate regularization strategy. Preserving these features is also important for differentiating between motion and measurement noise.

As we refreshed in the previous subsection, anisotropic structures like edges can be efficiently represented with a multivariate basis or frame, such as shearlets. Regularizing each time step independently by applying the same 2D shearlet transform for each time step would mostly likely lead to a denoised reconstruction which completely disregards the time evolution.   
However, the directional properties of shearlets are ideally suited to well represent a continuous evolution of a cartoon-like object. This suggest that considering 3D shearlet systems, where the third dimension is time evolution, would not only regularize the ill-posedness given by the scarce data and noise, but would also add an additional constraint that ``link'' the subsequent time frames.  This amounts to choosing the penalty term $\mathcal{P}(f)$ in~\eqref{eq:ReguGenFun} equal to:
\[
\mathcal{P}(f) = \| \sh_{\gen,\varphi} f\|_1,
\]
where $\sh_{\gen,\varphi}$ represents the 3D shearlet transform as in Definition~\ref{def:ShearletSystem3d}, leading to the following motion-aware variational model:
\begin{equation} 
\argmin_{f \geq 0} \quad \int_0^T \frac{1}{2} \big\|\Radon_{I(t)} f(\cdot,t) - y_t^{\eta} \big\|_2^2 \, \dif t \; + \; \alpha \|\sh_{\gen, \varphi} f\|_1.
\label{eq:functionalJ}
\end{equation}
From the perspective of image reconstruction the 3D shearlet penalty term acts as a temporal constraint allowing, at the same time, to isolate and suppress noise from motion and to preserve edges across time. The constraint $f \geq 0$ imposes the non-negativity for the solution, a prior given by the physics of the problem, given that $f$ is understood as a measure of X-ray attenuation. 

We conclude the section by showing the existance of a minimizer for~\eqref{eq:functionalJ}.

\begin{theorem}[Existance of a minimizer]
\label{th:minimizer}
Let $\Radon_{I(t)}: L^2\left(\Omega \times [0,T]\right) \longrightarrow L^2 \left(\cup_{t \in [0,T]} I(t) \times \lbrace t \rbrace \right)$ be the time dependent Radon transform as defined in~\eqref{eq:TimeRadonTr} and $\sh_{\gen, \varphi}: L^2\left(\Omega \times [0,T]\right) \longrightarrow \ell^2(\Z^3 \times \Lambda \times \Lambda \times \Lambda)$ a compactly supported discrete pyramid-adapted shearlet transform and $\alpha > 0$. Then the functional

\begin{equation*} 
J(f) \; = \; \int_0^T \big\|\Radon_{I(t)} f(\cdot,t) - y_t^{\eta}\big\|_2^2 \dif t + \alpha \|\sh_{\gen, \varphi} f\|_1
\end{equation*}
accepts a minimizer in the constraint set  $\mathcal{C}_F := \{ f \in L^2\left( \Omega \times [0,T] \right) : \ 0 \leq f(x,t) \leq F \ a.e. \}$ for any $F > 0$.
\end{theorem}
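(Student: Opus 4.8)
The plan is to apply the direct method of the calculus of variations. First I would note that $J$ is bounded below by $0$, since both the data-fidelity integral and the $\ell^1$ penalty are non-negative; hence $m := \inf_{f \in \mathcal{C}_F} J(f) \in [0,\infty)$ is well defined, and I may fix a minimizing sequence $(f_n) \subset \mathcal{C}_F$ with $J(f_n) \to m$. The constraint set is bounded in $L^2(\Omega \times [0,T])$, because $0 \leq f_n \leq F$ a.e. on a domain of finite measure gives $\|f_n\|_2^2 \leq F^2\,|\Omega \times [0,T]|$. Since $L^2$ is a Hilbert space, hence reflexive, this boundedness yields a subsequence (not relabelled) with $f_n \rightharpoonup f^*$ weakly in $L^2$. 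Finiteness of the penalty along the sequence is guaranteed by Lemma~\ref{lemma:shearletBound}: elements of $\mathcal{C}_F$ are bounded on a set of finite measure and thus lie in $L^q$ for the required $q > 4$, so $\|\sh_{\gen,\varphi} f\|_1 < \infty$ throughout $\mathcal{C}_F$.

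Next I would verify that the weak limit remains admissible. The set $\mathcal{C}_F$ is convex and strongly closed in $L^2$ (the pointwise a.e.\ bounds $0 \leq f \leq F$ are preserved under $L^2$-limits, passing to a further a.e.-convergent subsequence), hence it is weakly closed by Mazur's lemma; therefore $f^* \in \mathcal{C}_F$. It then suffices to establish that $J$ is weakly sequentially lower semicontinuous, so that $J(f^*) \leq \liminf_n J(f_n) = m$ forces $f^*$ to be a minimizer.

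For the data-fidelity term I would use that $\Radon_{I(t)}$ is a bounded linear operator by \cite[prop.~2.2]{burger2017}, and that $\int_0^T \|\Radon_{I(t)} f(\cdot,t) - y_t^{\eta}\|_2^2 \,\dif t$ equals the squared norm of $\Radon_{I(t)} f - y^\eta$ in the target space. Weak convergence $f_n \rightharpoonup f^*$ thus gives $\Radon_{I(t)} f_n \rightharpoonup \Radon_{I(t)} f^*$, and since norms are weakly lower semicontinuous the data term inherits this property. For the regularization term I would exploit that each shearlet element $\psi_\xi$ belongs to $L^2(\Omega \times [0,T])$ (being a compactly supported $L^2$ function restricted to a bounded set), so weak convergence yields coefficientwise convergence $\langle f_n, \psi_\xi \rangle \to \langle f^*, \psi_\xi \rangle$ for every index $\xi$. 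Applying Fatou's lemma over the countable index set with respect to counting measure then gives
\[
\|\sh_{\gen,\varphi} f^*\|_1 = \sum_{\xi} |\langle f^*, \psi_\xi \rangle| \leq \liminf_n \sum_{\xi} |\langle f_n, \psi_\xi \rangle| = \liminf_n \|\sh_{\gen,\varphi} f_n\|_1,
\]
which is precisely weak lower semicontinuity of the penalty. Combining the two estimates yields $J(f^*) \leq \liminf_n J(f_n) = m$, so $f^*$ attains the infimum.

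The step requiring the most care is the lower semicontinuity of the nonsmooth $\ell^1$ penalty under merely weak convergence, since this term is not strongly continuous on all of $L^2$ with the $L^q$-type bound of Lemma~\ref{lemma:shearletBound}. The Fatou route above sidesteps this difficulty entirely: its only prerequisite, coefficientwise convergence of the shearlet inner products, follows directly from $\psi_\xi \in L^2$ together with the definition of weak convergence. Consequently I do not anticipate a genuine obstacle beyond the routine bookkeeping of the weak-closedness and operator-boundedness facts, all of which are either standard or already available in the excerpt.
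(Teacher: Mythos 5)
Your proof is correct, and it follows a genuinely different (more self-contained) route than the paper's. The paper verifies the hypotheses of a general existence theorem for convex functionals on bounded, closed, convex subsets of a reflexive space (citing \cite[p.~177]{royden2010}): it checks that $\mathcal{C}_F$ is convex and closed in $L^2$, that $J$ is proper and convex, and --- this is where Lemma~\ref{lemma:shearletBound} enters --- that both terms of $J$ are \emph{continuous} on $\mathcal{C}_F$ with respect to $\|\cdot\|_q$, so that the cited theorem applies. You instead run the direct method explicitly in $L^2$: minimizing sequence, weak sequential compactness of the bounded set $\mathcal{C}_F$, weak closedness of $\mathcal{C}_F$ via convexity and strong closedness (Mazur), and weak sequential lower semicontinuity of each term proved by hand --- for the data term via weak continuity of the bounded linear operator $\Radon_{I(t)}$ together with weak lower semicontinuity of the norm, and for the $\ell^1$ penalty via coefficientwise convergence of the shearlet inner products plus Fatou's lemma over the countable index set. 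The most notable difference is the role of Lemma~\ref{lemma:shearletBound}: the paper needs its quantitative bound to obtain continuity of the penalty, whereas you invoke it only for properness (finiteness of $J$ on $\mathcal{C}_F$); your Fatou argument delivers lower semicontinuity of the penalty with no bound at all, and never uses convexity of $J$ itself. What your route buys is independence from the exact statement of the cited minimization theorem and from the somewhat delicate interplay between the $L^2$ and $L^q$ topologies on $\mathcal{C}_F$ that the paper's continuity argument requires; what the paper's route buys is brevity, since all the weak-topology work is absorbed into the citation. At bottom both arguments rest on the same two pillars --- weak compactness of $\mathcal{C}_F$ and weak lower semicontinuity of $J$ --- so your proof can be read as an unpacking, and tightening, of the paper's.
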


\begin{proof}
The constraint set $\mathcal{C}_F$ is clearly convex and also closed in $L^2$ (closedness follows, for instance, from the Riesz-Fisher theorem~\cite[p. 148]{royden2010}) and since $L^\infty (\Omega \times [0,T])$ is closed in $L^2 (\Omega \times [0,T])$. 
Moreover $F \geq \|f\|_\infty \geq C_q \|f\|_q$, for any $q \geq 1$ and constant $C_q$.

To prove the continuity of $J$, let's consider its two terms separately. The first part of $J$ is continuous in $L^2$ thanks to the continuity of the 2-norm and~\cite[prop. 2.2]{burger2017}. In particular this implies the continuity in $\mathcal{C}_F$ also with respect to $\|\cdot\|_q$. Given~lemma \ref{lemma:shearletBound}, the second part of $J$ is also continuous in $\mathcal{C}_F$ with respect to $\|\cdot\|_q$.

It is easy to check that the functional $J$ is proper and convex. Therefore it must have a minimizer in the bounded set $\mathcal{C}_F$ \cite[p. 177]{royden2010}.
\end{proof}

We remark that, in practical applications, the upper bound $F$ on the attenuation values is anyway enforced by the digital setting of the recording detector. In general, detectors produce values in the interval $[0, 4095]$ in a 12 bit A/D conversion.

\section{Numerical Reconstruction Framework}
\label{sec:DiscrTh}
In this section we discuss the discretization of the proposed model~\eqref{eq:functionalJ} and how to practically carry out the minimization. For this, we will be employing a primal-dual scheme endowed with an automatic choice for the regularization parameter.  

\subsection{Discretization of the model}
It is well known that in the discrete setting the solution of the Fredholm integral equation like~\eqref{eq:RadonTr} leads to the following system of linear equations:
\begin{equation}
\RadonD \vec{f} = \vec{y^{\eta}},
\label{eq:StaticRadonDiscr}    
\end{equation}
where $\vec{f} \in \R^N$ is a discrete representation of the unknown object, using pixels as a basis, and stacked in a vector form. Similarly, $\vec{y^{\eta}} \in \R^M$ is the vector of the measured noisy data, whose dimension depends on the number of projection views and sensors in the detector. The matrix $\RadonD \in \R^{M \times N}$ is the discrete counterpart of the (static) Radon transform, containing information about the specific geometry set up, which sections of the object are penetrated and how much. 

Similarly to the static case, we aim at representing also the dynamic case~\eqref{eq:TimeRadonTr} as a system of linear equations. The measurements $y_t^{\eta}$ are taken at certain time instances $\tau = 1, 2, \ldots, T$ during a fixed time period $[0,T]$. At each time instance $\tau$ we have a system of linear equations like~\eqref{eq:StaticRadonDiscr}, namely:
\begin{equation}
\RadonD_{\tau} \vec{f}_{\tau} = \vec{y}_{\tau}^{\vec{\eta}}
\qquad \text{for each } \;
\tau = 1, 2, \ldots, T.
\label{eq:StaticRadonDiscrTau}    
\end{equation}
Now, it is easy to obtain the discrete dynamic system of linear equations by stacking, for each $\tau \in \{1, \ldots, T\}$, the vectors $\vec{f}_{\tau}$ and $\vec{y}_{\tau}^{\vec{\eta}}$ into longer column vectors $\widetilde{\vec{f}} \in \R^{NT}$ and $\widetilde{\vec{y^{\eta}}} \in \R^{MT}$, respectively; the matrix for the projection of all time instances is given by a block diagonal matrix $\widetilde{\RadonD} \in \R^{NT \times MT}$. Hence, we get the following system of linear equations:  
\begin{equation} 
\widetilde{\RadonD} \vec{\widetilde{f}} \; = \;
\begin{bmatrix}
\RadonD_1 & \Nolla & \cdots & \Nolla \\[0.2em]
\Nolla & \RadonD_2 & \cdots & \Nolla \\[0.2em]
\vdots & \vdots & \ddots & \vdots \\[0.2em]
\Nolla & \Nolla & \cdots & \RadonD_T
\end{bmatrix}
\begin{bmatrix}
\vec{f}_1 \\[0.2em]
\vec{f}_2 \\[0.2em]
\vdots \\[0.2em]
\vec{f}_T 
\end{bmatrix}
\; = \;
\begin{bmatrix}
\vec{y}_1^{\vec{\eta}} \\[0.2em]
\vec{y}_2^{\vec{\eta}} \\[0.2em]
\vdots \\[0.2em]
\vec{y}_T ^{\vec{\eta}}
\end{bmatrix}
\; = \; \widetilde{\vec{y^{\eta}}}
\label{eq:Rf=m 3D}
\end{equation}
where $\Nolla \in \R^{M \times N}$ is the zero matrix. Finally, the discretization into time steps for the entire variational model~\eqref{eq:functionalJ} yields:
\begin{equation}
\argmin_{\vec{\widetilde{f}} \geq 0} \quad 
    \bigg\{
    \frac{1}{2} \big\|\widetilde{\RadonD} \vec{\widetilde{f}} -  \widetilde{\vec{y^{\eta}}} \big\|_2^2 \; + \;
    \alpha \, \big\| \shD_{\text{3D}} \vec{\widetilde{f}} \big\|_1
    \bigg\}
\label{eq:MotionAwareVMdiscr}    
\end{equation}
or, equivalently,
\begin{equation}
\argmin_{\vec{\widetilde{f}} \geq 0} \quad 
    \Bigg\{
    \sum_{\tau=1}^T
    \frac{1}{2} \big\|\RadonD_{\tau} \vec{f}_{\tau} -  \vec{y}_{\tau}^{\vec{\eta}} \big\|_2^2 \; + \;
    \alpha \, \big\| \shD_{\text{3D}} \vec{\widetilde{f}} \big\|_1
    \Bigg\}
\label{eq:MotionAwareVMdiscr2}    
\end{equation}
where the the inequalities are meant component-wise. Without the temporal connection given by the 3D shearlet regularization term,
each time frame can also be considered just as a static  tomographic problem and each time instance can be solved \textit{independently} using lower dimensional shearlets or wavelets, leading to $T$ disjoint subproblems: 
\begin{equation}
\argmin_{\vec{f}_1, \ldots, \vec{f}_T \geq 0} \quad 
    \bigg\{
    \frac{1}{2} \big\|\RadonD_{\tau} \vec{f}_{\tau} -  \vec{y}_{\tau}^{\vec{\eta}} \big\|_2^2 \; + \;
    \alpha \, \big\| \mathbf{\Psi} \vec{f}_{\tau} \big\|_1
    \bigg\}
\qquad
\text{for each }
\quad \tau=1,\ldots, T
\label{eq:StaticVMdiscr}    
\end{equation}
where $\mathbf{\Psi}$ represent either a 2D shearlet transform $\shD_{\text{2D}}$ or a 2D wavelet transform $\mathbf{W}_{\text{2D}}$. This static variants will be used in section~\ref{sec:ExpAndRes} for comparison with the proposed motion-aware shearlet-based model.

\subsection{Minimization algorithm}
While theorem~\ref{th:minimizer} ensures the existence of a minimizer, the practical reconstruction process from discrete measurements requires a specific algorithm to address, in particular, the non-differentiability of the $\ell_1$-norm.

To this end, we use a primal-dual approach, called \emph{primal-dual fixed point} (PDFP)~\cite{chen2016}, recently introduced to generalize the well-known iterative soft-thresholding algorithm (ISTA)~\cite{daubechies2004} to the case of frames and allowing constraints on convex sets. In addition, the PDFP algorithm is endowed with an automatic strategy for the choice of the regularization parameter, called \textit{controlled wavelet domain sparsity} (CWDS)~\cite{purisha2017}, which we extended also to the case of shearlets. The non-negativity constraint, which is known to have a strong positive effect on the reconstruction quality, is easily included in PDFP and the automated sparsity tuning provides robustness to the choice of regularization parameter. It also reduces any accidental bias between the different shearlet and wavelet systems since those require uniquely weighted regularization terms.

In details, the PDFP algorithm is well suited for solving minimization problems of the form:
\[
\argmin_{\vec{x} \in \R^\nu} J_1(\vec{x}) + J_2\big(\mathbf{B} \vec{x}\big) + J_3(\vec{x}),
\]
where $J_1,J_2,J_3$ are proper lower semi-continuous convex functions, $J_1$ is differentiable with $1/L$-Lipschitz continuous gradient and $\mathbf{B}:\R^{\nu} \longrightarrow \R^{\mu}$ is a bounded linear map. This applies exactly to the case of equations~\eqref{eq:MotionAwareVMdiscr2} and~\eqref{eq:StaticVMdiscr} by choosing $J_1(\cdot) = \frac{1}{2}\|\widetilde{\RadonD} \cdot - \widetilde{\vec{y^{\eta}}}\|_2^2$ or $J_1(\cdot) = \frac{1}{2}\|\RadonD_{\tau} \cdot - \vec{y}_{\tau}^{\vec{\eta}}\|_2^2$ for each $\tau =1,\ldots,T$, $J_2(\cdot) = \alpha\|\cdot\|_1$ with $\mathbf{B}$ equal to $\shD_{\text{3D}}$, $\shD_{\text{2D}}$ or $\mathbf{W}_{\text{2D}}$, and $J_3$ as the non-negativity constraint given by the indicator function of the non-negative orthant:
\begin{equation*}
\iota_{\R_+}(x) = \left\lbrace
\begin{array}{ll}
0       &\qquad \text{if } \ x \geq 0 \\
+\infty &\qquad \text{if } \ x < 0.
\end{array} \right.
\end{equation*}
The $(i + 1)$th iterate, with $i = 0, 1, 2, \ldots$, of the PDFP algorithm is computed via the following iteration scheme:
\begin{equation} 
\text{(PDFP)} \ \left\lbrace
\begin{array}{cl}
\vec{d}^{(i+1)} &= \; \prox{\gamma J_3} \big(\vec{x}^{(i)} - \gamma \nabla J_1(\vec{x}^{(i)}) - \lambda \mathbf{B}^T \vec{v}^{(i)} \big), \\[0.35em]
\vec{v}^{(i+1)} &= \; \big(\mathbbm{1} - \prox{\frac{\gamma}{\lambda} J_2}\big) \big(\mathbf{B} \vec{d}^{(i+1)} + \vec{v}^{(i)} \big), \\[0.35em]
\vec{x}^{(i+1)} &= \; \prox{\gamma J_3} \big(\vec{x}^{(i)} - \gamma \nabla J_1(\vec{x}^{(i)}) - \lambda \mathbf{B}^T \vec{v}^{(i+1)} \big),
\end{array} \right.
\label{eq:genPDFP}
\end{equation}
where $0 < \lambda < 1/ \lambda_{\max} (\mathbf{B B}^T)$ and $0 < \gamma < 2L$, $\mathbbm{1}$ is the identity operator and $\prox{}$ denotes the so-called proximity operator. 

Here, $\lambda_{\max} (\mathbf{B B}^T)$ is the largest eigenvalue of the square matrix $\mathbf{B B}^T$, $1/L$ is the Lipschitz constant for $\nabla J_1$.
In particular, as advised in the original paper~\cite{chen2016}, $\gamma$ and $\lambda$ should be chosen close to their upper limits $2L$ and $1/\lambda_{\max}$. Due to their linearity, the operators $\widetilde{\RadonD}$ and $\RadonD_{\tau}$, for each $\tau=0,1,\ldots, T$ can be normalized, yielding $1/L = 1$. Regarding $\lambda_{\max} (\mathbf{B B}^T)$, for the comparison case of 2D wavelets $\mathbf{W}_{\text{2D}}$, we will consider an orthogonal family, namely $\lambda_{\max} (\mathbf{W}_{\text{2D}}\mathbf{W}_{\text{2D}}^T) =1$. Regarding shearlets, both $\shD_{\text{3D}}$ and $\shD_{\text{2D}}$ represents compactly supported shearlet systems, which in general are frames. We have the following lemma.

\begin{lemma} 
For a linear transformation $\mathbf{B}: \R^{\nu} \longrightarrow \R^{\mu}$ with upper frame bound $u$, the largest eigenvalue of $\mathbf{B B}^T$ is bounded:
\[
\lambda_{\max} (\mathbf{B B}^T) \; = \; 
    \lambda_{\max} (\mathbf{B}^T \mathbf{B}) \; \leq \; u^2.
\]    
\label{lem:eigenbound}
\end{lemma}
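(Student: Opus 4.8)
The plan is to reduce the claim to two standard facts: that $\mathbf{B}\mathbf{B}^T$ and $\mathbf{B}^T\mathbf{B}$ share their nonzero spectrum, and that the top eigenvalue of a symmetric positive semidefinite matrix equals the squared operator norm of $\mathbf{B}$, which the upper frame bound controls directly. Since $\mathbf{B}$ is interpreted as the analysis operator of the (shearlet or wavelet) frame, the whole argument is elementary linear algebra once the right characterization of $\lambda_{\max}$ is in place.

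First I would establish the equality $\lambda_{\max}(\mathbf{B}\mathbf{B}^T) = \lambda_{\max}(\mathbf{B}^T\mathbf{B})$. Both matrices are symmetric and positive semidefinite, so all their eigenvalues are real and nonnegative. It is standard that $\mathbf{B}\mathbf{B}^T$ and $\mathbf{B}^T\mathbf{B}$ have the same nonzero eigenvalues: if $\mathbf{B}^T\mathbf{B}\vec{v} = \lambda\vec{v}$ with $\lambda \neq 0$ then $\mathbf{B}\vec{v} \neq 0$ and $\mathbf{B}\mathbf{B}^T(\mathbf{B}\vec{v}) = \lambda(\mathbf{B}\vec{v})$, and symmetrically. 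Since a frame analysis operator is nonzero, the largest eigenvalue is strictly positive and hence a common nonzero eigenvalue of both products, which gives the first equality.

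Next I would invoke the Rayleigh-quotient characterization of the largest eigenvalue of a symmetric positive semidefinite matrix and then apply the defining upper frame inequality. For $\mathbf{B}^T\mathbf{B}$ one has
\[
\lambda_{\max}(\mathbf{B}^T\mathbf{B}) \; = \; \sup_{\vec{f} \neq 0} \frac{\langle \mathbf{B}^T\mathbf{B}\vec{f}, \vec{f}\rangle}{\|\vec{f}\|^2} \; = \; \sup_{\vec{f} \neq 0} \frac{\|\mathbf{B}\vec{f}\|^2}{\|\vec{f}\|^2}.
\]
Interpreting $\mathbf{B}$ as the analysis operator of the frame, the upper frame bound $u$ yields $\|\mathbf{B}\vec{f}\|^2 \leq u^2 \|\vec{f}\|^2$ for every $\vec{f}$, so each quotient above is at most $u^2$ and therefore $\lambda_{\max}(\mathbf{B}^T\mathbf{B}) \leq u^2$, completing the bound.

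I do not expect a genuine obstacle, as the content is routine once the variational formula for $\lambda_{\max}$ is used. The only point requiring care is bookkeeping of the frame-bound convention: whether the bound is stated for $\|\mathbf{B}\vec{f}\|$ or for $\|\mathbf{B}\vec{f}\|^2$ determines whether the right-hand side reads $u^2$ or $u$. With the convention that $u$ bounds the ratio $\|\mathbf{B}\vec{f}\|/\|\vec{f}\|$, equivalently $\sum_{\xi} |\langle \vec{f}, \psi_{\xi}\rangle|^2 \leq u^2 \|\vec{f}\|^2$, the stated exponent $u^2$ is the correct one.
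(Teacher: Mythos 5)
Your proposal is correct and follows essentially the same route as the paper: the equality via the shared nonzero spectrum of $\mathbf{B}\mathbf{B}^T$ and $\mathbf{B}^T\mathbf{B}$, and the bound via the frame inequality $\|\mathbf{B}\vec{f}\|^2 \leq u^2\|\vec{f}\|^2$. The only cosmetic difference is that you phrase the bound through the Rayleigh-quotient supremum, while the paper applies the same inequality directly to the top eigenvector of $\mathbf{B}^T\mathbf{B}$ --- an identical computation in substance.
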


\begin{proof}
The equality follows from the fact that $\mathbf{BB}^T$ and $\mathbf{B}^T \mathbf{B}$ must have the same non-zero eigenvalues. Both are also positive and semi-definite, hence the eigenvalues must be non-negative and at least one eigenvalue must be non-zero unless $\mathbf{B}$ is a zero matrix.

Now, let $\bar{\lambda} > 0$ be the largest eigenvalue of $\mathbf{B}^T \mathbf{B}$ and $\vec{g} \in \R^n$ the corresponding eigenvector. Then:
\[
    u^2 \|\vec{g}\|^2 \; \geq \; \|\mathbf{B} \vec{g}\|^2
    \; = \; \vec{g}^T \mathbf{B}^T \mathbf{B} \vec{g} 
    \; = \; \vec{g}^T \bar{\lambda} \vec{g} 
    \; = \; \bar{\lambda} \|\vec{g}\|^2
\]
which ends the proof.
\end{proof}
In particular, the upper frame bound for the 2D and 3D compactly supported shearlets employed in section~\ref{sec:ExpAndRes} is 1. Hence, by lemma~\ref{lem:eigenbound} we have 
$\lambda_{\max} (\shD \, \shD^T) \leq 1$. Therefore, the parameters are limited by $\gamma < 2$ and $\lambda < 1$.

Regarding the proximity operator, for any constant $\gamma > 0$, the corresponding proximity operator of the non-negativity constraint $\gamma \iota_{\R_+}(x)$ is simply a projection onto the non-negative orthant. The proximity operator of the term $\alpha \|\cdot\|_1$ is the so called soft-thresholding operator $S_\alpha$~\cite{daubechies2004} given by:
\[
S_\alpha(x) := \left\lbrace
\begin{array}{ll}
x - \alpha &\qquad \text{ if } \ \ x > \alpha, \\
0          &\qquad \text{ if } \ |x| \leqslant \alpha, \\
x + \alpha &\qquad \text{ if } \ \ x < -\alpha.
\end{array} \right.
\]
Both proximity operators are meant to be applied component-wise to vectors.
Finally, we have 
\[
\nabla \bigg( \frac{1}{2} \Big\|\widetilde{\RadonD}\widetilde{\vec{f}} - \widetilde{\vec{y^{\eta}}} \Big\|_2^2 \bigg) 
\; = \; 
\widetilde{\RadonD}^T \widetilde{\RadonD} \widetilde{\vec{f}} - \widetilde{\RadonD}^T \widetilde{\vec{y^{\eta}}} 
\]
and similarly, for the static case, $\nabla \big( \frac{1}{2}\|\RadonD_{\tau} \vec{f}_{\tau} - \vec{y}_{\tau}^{\vec{\eta}}\|_2^2 \big) 
\; = \; 
\RadonD_{\tau}^T \RadonD_{\tau} \vec{f}_{\tau} - \RadonD_{\tau}^T \vec{y}_{\tau}^{\vec{\eta}}$.
In conclusion, the application of the PDFP method to equations~\eqref{eq:MotionAwareVMdiscr2} yields the following iteration scheme:
\begin{equation}  
\left\lbrace
\begin{array}{cl}
\widetilde{\vec{d}}^{(i+1)}& = \proj{\R_+^{NT}} \big(\widetilde{\vec{i}}^{(i)} - \gamma (\widetilde{\RadonD}^T \widetilde{\RadonD}\widetilde{\vec{f}}^{(i)} - \widetilde{\RadonD}^T\widetilde{\vec{y^{\eta}}}) - \lambda \shD_{\text{3D}}^T \widetilde{\vec{v}}^{(i)} \big), \\[0.5em]
\widetilde{\vec{v}}^{(i+1)}& = \big(\mathbbm{1} - S_{\alpha \frac{\gamma}{\lambda}}\big) \big(\shD_{\text{3D}} \widetilde{\vec{d}}^{(i+1)} + \widetilde{\vec{v}}^{(i)} \big), \\[0.5em]
\widetilde{\vec{f}}^{(i+1)}& = \proj{\R_+^{NT}} \big(\widetilde{\vec{f}}^{(i)} - \gamma (\widetilde{\RadonD}^T \widetilde{\RadonD}\widetilde{\vec{f}}^{(i)} - \widetilde{\RadonD}^T\widetilde{\vec{y^{\eta}}}) - \lambda \shD_{\text{3D}}^T \widetilde{\vec{v}}^{(i+1)} \big)
\end{array} \right.
\label{eq:pdfp}
\end{equation}
where $\widetilde{\vec{f}}^{(i)} \in \R^{NT}$ is the reconstruction, $\widetilde{\vec{v}}^{(i)} \in \R^{RNT}$ is its dual variable vector in the shearlet domain and $\widetilde{\vec{d}}^{(i)} \in \R^{NT}$ a temporary variable vector at iteration $i$.

A similar iteration scheme can be derived for the static case of equation~\eqref{eq:StaticVMdiscr} by substituting, for each $\tau = 1, \ldots, T$, $\widetilde{\RadonD} \in \R^{NT \times MT}$ with $\RadonD_{\tau}\in \R^{N \times M}$, $\widetilde{\vec{f}}^{(k)} \in \R^{NT}$ with $\vec{f}_{\tau}^{(k)} \in \R^N$ and $\widetilde{\vec{y^{\eta}}} \in \R^{MT}$ with $\vec{y}_{\tau}^{\vec{\eta}} \in \R^M$. As a consequence the temporary vector $\vec{d}^{(k)}$ belongs to $\R^N$ and the dual vector $\vec{v}^{(k)}$ belongs to the 2D shearlet or wavelet domain.

Observe that, even though the PDFP algorithm changes the role of the parameter $\alpha$ from the weight of the regularization term to the thresholding value when the dual iterate $\widetilde{\vec{v}}^{(k)}$ in~\eqref{eq:pdfp} is computed, it does not eliminate the need of picking a suitable value for $\alpha$. This is a notoriously difficult task with many options available. Which one to pick depends on the problem at hand. Here, we apply an automated controlled sparsity scheme, called controlled wavelet domain sparsity (CWDS), first proposed in~\cite{purisha2017}.

\begin{algorithm}[t] \label{al:CWDS}
\SetKwInOut{Input}{input}
\SetAlgoLined
\caption{PDFP with controlled sparsity (CWDS-PDFP) for equation~\eqref{eq:pdfp}}
\Input{measurements $\widetilde{\vec{y^{\eta}}}$, forward operator $\widetilde{\RadonD}$, parameters $\gamma, \lambda > 0$, \emph{a priori} sparsity $C_\text{pr}$, iteration limit $\mathcal{I}$, tolerances $\delta_1, \delta_2$ and initialization parameters $\zeta, \omega > 0$.}
\vspace{2mm}\textbf{initialize}: $\widetilde{\vec{f}}^{(0)} = \vec{0}, \widetilde{\vec{v}}^{(0)} = \vec{0}, k = 0, e^{(0)} = 1$ and $C^{(0)} = 1$\;
\textit{\% calculate initial thresholding parameter} \\
$h=(1-C_\text{pr})R$  \;
$\alpha^{(0)} = \zeta \bigtriangleup_h \big( \shD_{\text{3D}} \widetilde{\RadonD}^T\widetilde{\vec{y}^{\eta}} \big)$ \;
\textit{\% calculate initial step length} \\
$\beta \ = \ \omega \alpha^{(0)}$ \;
\vspace{2mm}\While{$i < \mathcal{I}$, and $|e^{(i)}| \geq \delta_1$ or $\epsilon \geq \delta_2$}{
 \vspace{2mm}\textit{\% check difference in sparsity} \\
  $e^{(i+1)} = C^{(i)} - C_\text{pr}$\;
 \vspace{2mm}\textit{\% shrink the step size if necessary} \\
  \If{$\sgn(e^{(i+1)}) \neq \sgn(e^{(i)})$}{
    $\beta = \beta(1 - |e^{(i)} - e^{(i-1)}|)$\;}
 \vspace{2mm}\textit{\% update thresholding value} \\
  $\alpha^{(i+1)} = \max\lbrace 0, \alpha^{(i)} + \beta e^{(i+1)} \rbrace$\;
  \vspace{2mm}\textit{\% perform PDFP step} \\
  $\widetilde{\vec{d}}^{(i+1)} = \max\lbrace \vec{0}, \widetilde{\vec{f}}^{(i)} - \gamma (\widetilde{\RadonD}^T \widetilde{\RadonD}\widetilde{\vec{f}}^{(i)} - \widetilde{\RadonD}^T\widetilde{\vec{y}^{\eta}}) - \lambda \shD_{\text{3D}}^T \widetilde{\vec{v}}^{(i)} )\rbrace$\;
  $\widetilde{\vec{v}}^{(i+1)} =  (\mathbbm{1} - S_{\alpha^{(i)}\frac{\gamma}{\lambda}})(\shD_{\text{3D}} \widetilde{\vec{d}}^{(i+1)} + \widetilde{\vec{v}}^{(i)})$\;
  $\widetilde{\vec{f}}^{(i+1)} = \max\lbrace \vec{0}, \widetilde{\vec{f}}^{(i)} - \gamma (\widetilde{\RadonD}^T \widetilde{\RadonD}\widetilde{\vec{f}}^{(i)} - \widetilde{\RadonD}^T\widetilde{\vec{y}^{\eta}}) - \lambda \shD_{\text{3D}}^T \widetilde{\vec{v}}^{(i+1)} )\rbrace$\;
  \vspace{2mm}\textit{\% update current sparsity level} \\
  $C^{(i+1)} = \frac{\#_{\kappa} (\SH_{\text{3D}} \widetilde{\vec{f}}^{(i+1)})}{RNT}$\;
  \vspace{2mm}\textit{\% check relative change of the reconstruction} \\
  $\epsilon = \frac{\|\widetilde{\vec{f}}^{(i+1)} - \widetilde{\vec{f}}^{(i)}\|_2}{\|\widetilde{\vec{f}}^{(i+1)}\|_2}$\;
  \vspace{2mm}\textit{\% update iteration counter} \\
  $i = i+1$\;
}
\textbf{return}
$\widetilde{\vec{f}}^{(i)}$\;
\end{algorithm}

In CWDS, the level of sparsity of the reconstruction is driven, through a control tuning algorithm, to an \textit{a priori} known ratio of nonzero versus zero wavelet or shearlet coefficients in the unknown.
The \emph{a priori} sparsity of the unknown is computed either from the groundtruth, when available, or by estimating it from high resolution reconstructions, such as FBP from densely sampled angles. 
The \textit{a priori} sparsity level is calculated according to the formula:
\[
C_\text{pr} = \frac{\#_{\kappa} \big(\shD_{\text{3D}} \widetilde{\vec{f}}\, \big)}{RNT},
\]
where $\#_{\kappa}$ counts the number of coefficients with absolute value larger than $\kappa$ and $R$ is the total number of shearlet coefficients for the given sized target.
Then, at each iteration $k=0,1,\ldots$, the sparsity level $C^{(k)}$ of the current iterate $\widetilde{\vec{f}}^{(k)}$ is calculated and compared to $C_\text{pr}$. If the current sparsity is too high, more coefficients need to be eliminated and hence $\alpha^{(k)}$ is updated to a smaller value, and vice versa. As the sparsity levels get closer, the tuning gets finer and finer. In this way, the changes can initially be fast but later oscillating sparsity levels are avoided. The initial thresholding parameter $\alpha^{(0)}$ is chosen as in the original paper~\cite{purisha2017}:
\[
\alpha^{(0)} = \zeta \bigtriangleup_h  \big(\shD_\text{3D} \widetilde{\RadonD}^T \widetilde{\vec{y^{\eta}}} \big),
\]
where $\bigtriangleup_h \vec{x}$ denotes the mean of the $h=\lceil(1-C_\text{pr})R\rceil$ biggest components of a given vector $\vec{x}$ in absolute value.

The pseudocode of PDFP with controlled sparsity, for the dynamic case, is summarized in Algorithm~\ref{al:CWDS}. The pseudocode reads similarly for the static case, with the suitable adjustments of the PDFP iteration, as described above.

\section{Experiments and results}
\label{sec:ExpAndRes}
In this section, we evaluate the performance of the proposed motion-aware reconstruction approach. For thorough testing we consider a combination of different measurement setups and different kinds of simulated and measured data.

\subsection{Preliminaries}
Let us begin by describing the considered experimental scenarios and giving details on the implementation of the used operators, the methods that we compare our results with and the metrics we use to compare the results.

\subsubsection{Experimental scenarios} 
\label{sec:data}
We consider three different types of data for evaluating our proposed method: a digital data set and two measured data.  
A common characteristic to all of the data sets is a static stem-like structure containing a dynamic contrast agent which changes in visibility across the time steps.
The digital phantom was specifically created to simulate the increasing intensity of the contrast agent and other relevant features that are know to be present in the real data. The measured data are the results of X-ray microtomography measurements of two targets: a physical gel phantom and a young living \emph{populus tremula} tree grown in laboratory from \emph{in-vitro} propagated material. 
\begin{figure}[t]
    \centering
    \includegraphics[width=0.35\columnwidth]{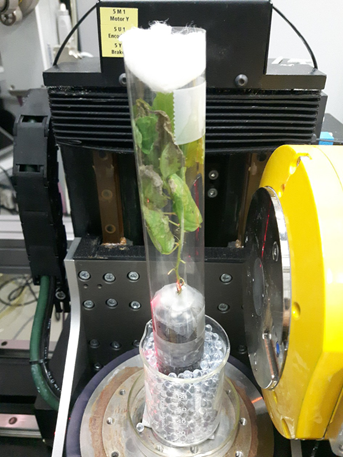}
    \caption{Living \emph{Populus tremula} ready to be imaged in the micro-CT scanner at University of Helsinki micro-CT laboratory. The plant was first placed in 50 ml Falcon tubes filled with peat soil, supported with a tube made of clear plastic film and covered with a moist tissue to reduce drying during imaging.}
    \label{fig:livingTree}
\end{figure}

\begin{enumerate}
\item \textbf{Digital plant stem phantom:} The simulated data consist of a $256 \times 256 \times 34$ array of attenuation coefficients where the third dimension serves as time. The background is a static image simulating the stem of a plant and, on top of it, we imposed five spreading points with higher attenuation values to simulate the spreading contrast agent. Figure \ref{fig:plantphantom} illustrates four different time steps of the simulated phantom.
\begin{figure}[t]
\centering
\includegraphics[width=\textwidth]{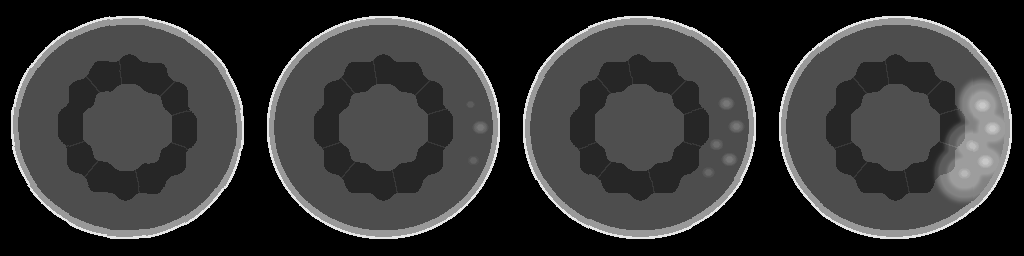}
\caption{Time instances $\tau$ = 1, 10, 16 and 34 of the $256 \times 256 \times 34$ digital plant stem phantom. The evolution across the time frames simulates contrast agent spreading in the plant stem.} \label{fig:plantphantom}
\end{figure}
Noisy measurements are simulated with ASTRA~\cite{van2016} using a fanbeam geometry: the ASTRA operator is applied independently to each time frame of the phantom, considering the same sampled angles for each time frame. To avoid inverse crime, measurements are simulated at a higher resolution and then downsampled or \textit{binned}, namely, two neighbouring detector elements are considered as a single wider element providing just a single measurement. Noise is simulated using additive Gaussian noise with $0$ mean and $1 \%$ variance. Due to the noise reducing effect of binning, the Gaussian noise was added afterwards to keep the noise at a known level.

\item \textbf{Gel phantom:}
The physical phantom is made of agarose gel in order to demonstrate iodide diffusion and geometry prior to imaging living samples, which cannot withstand high radiation doses from a denser set of measurement angles. 2\% agarose gel was cast into 50 ml Falcon test tubes and let set at room temperature. Once solidified, 5 intact plastic straws were inserted into the gel to create cavities by removing the gel in the tubes. These gel cavities were filled with 20\% sucrose solution to improve the diffusion capability of iodide into the gel matrix (to direct osmosis from the cavities to the gel matrix). After this, plastic straws that were pierced densely with a needle were inserted into the sucrose solution filled cavities.
All air bubbles were removed to assure proper diffusion. These straws acted both as physical barriers between the liquid phase and the gel matrix to slow down iodine diffusion rate and, more importantly, the pierced holes mimicked the function of plasmodesmata (which act as cell-to-cell passages between, \eg{}, phloem sieve elements or xylem tracheary elements and surrounding tissues).

After the first round of imaging, the gel phantom was manually diffused with a sugar-iodine-mix (potassium iodide diluted with 20\% sucrose solution for a 30\% concentration).  The first iodide application was given to the centermost cavity, the second application to the centermost and neighboring cavities, the third application to all 5 cavities and so on. An increasing amount of iodide was applied to the cavities between measurements.
The idea was to mimic an initial single cell spot exposure pattern that expands gradually laterally to neighboring cells with increasing iodide load (such a pattern would be a sum of both vertical diffusion from above via interconnected phloem sieve element or xylem tracheary element cells, and lateral cell-to-cell diffusion via plasmodesmata). 
Approximately 5 ml of sugar-iodine mixture was used in total and the phantom was imaged 17 times with 360 measurement angles. Each imaging round lasted 4.5 minutes and the interval between each imaging round was approximately 10 minutes. In total the whole experiment was conducted in 3 hours.

\item \textbf{Living tree:}
The \emph{in vitro} grown (\emph{Populus tremula}) plants were first placed in 50 ml Falcon tubes filled with peat soil, supported with a tube made of clear plastic film and covered with a moist tissue (to reduce drying during imaging) and let acclimate to imaging room temperature and humidity (Figure~\ref{fig:plantphantom}). Once set, the plant was passively perfused with a liquid sugar-iodine-mix (Iopamidol diluted with 20\% sugar water to lower the Iopamidol concentration to 25\%~\cite{keyes2017}) through cut leaf petiole. First round of imaging was done with 720 projections, followed by 11 imaging rounds with just 90 measurement angles. At the last time step the plant was imaged with 1440 angles.

The total amount of Iopamidol was approximately 250 $\upmu$l (as plant uptake and evaporation from the test tube during imaging could not be measured). Each set of 90 projections took 3 minutes to measure and the interval between time steps was approximately 15 minutes. The overall experiment time for the data used here was 4 hours. 
\end{enumerate}

The measurements for both the gel phantom and \textit{in vivo} plant were done at the University of Helsinki microCT laboratory with a Phoenix Nanotom S scanner~\cite{suuronen2014}. For the physical gel phantom the exposure time was 500 ms with 250 ms pause for rotations, X-ray tube acceleration voltage was 60 kV and tube current 250~$\upmu$A. For the living plant the exposure time was 1000~ms with 1000~ms pause for rotations, X-ray tube acceleration voltage 45 kV and tube current 335~$\upmu$A.

In both the gel phantom and \textit{in vivo} plant case the first and last set of measurements were originally done with a noticeably higher number of projections in order to obtain a good quality reconstruction for comparison. Both targets were imaged with 720 projections before the contrast agent was applied and the final measurement of the living plant, when the iodine has fully perfused into the stem, consists of 1440 projections. However, during the reconstruction phase, the measurements for these frames were downsampled to a lower projection count by picking fewer angles when the sinograms were constructed: for simplicity, the number of measurement angles was kept uniform and equal to the other frames.

\subsubsection{Operators}
All the forward operators were created in Matlab using the Spot operator~\cite{spot2012} from ASTRA~\cite{van2016}: Spot operators provide matrix-like memory efficient computations, compared to the standard matrix implementation of ASTRA. In particular, for each $\tau=1,\ldots,T$, each $\RadonD_{\tau}$ is built sampling the \textit{same} angles. While, in principle, sampling different angles at different time instances might provide more information than sampling always the same ones, in practice we observed a negligible difference which was not worth the effort of building different forward operators for each time instance.

For all our experiments involving shearlets, we are using compactly supported shearlets provided by the Shearlab 3D toolbox~\cite{Kutyniok2016} for Matlab. The 2D shearlet transform $\shD_\text{2D}$ was computed with 3 scales while the 3D shearlet transform $\shD_\text{3D}$ with 2 scales, both with the default number of shearing directions, resulting in 33 and 99 subbands, respectively. For the wavelet transform $\mathbf{W}_{\text{2D}}$ we used the implementation of Haar wavelets with 4 scales provided by the Wavelet Toolbox~\cite{wavelettoolbox2018}. 

Finally, notice that the 3D shearlet transform used in the motion-aware model proposed requires that the target is at least 33 pixels long in each dimension. For this reason, both measured data sets were inflated in the temporal dimension by adding each frame multiple times to obtain 34 frames for the physical phantom and 33 frames for the real plant data set. As a result, the reconstruction algorithm produces multiple reconstructions from the same sinograms. These can be either kept separate or be combined into a single reconstruction. For the static models with 2D shearlets and 2D wavelets such steps were not needed.

\subsubsection{Compared Methods} 
We compare our results with a variety of methods, briefly summarized in the following list.
\begin{itemize}
\item \textbf{FBP:} standard filtered back-projection using a \emph{Ram-Lak} filter, as provided by ASTRA. Each time frame is reconstructed independently without any temporal connection between time frames.
\item \textbf{2D wavelets:} The $\ell_1$-regularized wavelet solution of~\eqref{eq:StaticVMdiscr} with $\mathbf{\Psi} = \mathbf{W}_{\text{2D}}$. The PDFP algorithm with controlled sparsity is implemented using 2D Haar wavelets, as in the original paper~\cite{purisha2017}. The uniform sparsity control step is performed on each time step separately and there is not any temporal connection between time frames.
\item \textbf{2D shearlets:} The static $\ell_1$-regularized shearlet solution of~\eqref{eq:StaticVMdiscr} with $\mathbf{\Psi} = \shD_{\text{2D}}$. The PDFP algorithm with controlled sparsity is implemented using 2D compactly supported shearlets. The uniform sparsity control step is performed on each time step separately and there is not any temporal connection between time frames.
\end{itemize}

For measured data, FBP reconstructions from dense set of projections were used also for error estimates and for computing the desired sparsity level $C_\text{pr}$. For the gel phantom each time frame has been reconstructed from 360 projections and for the real plant the first and last measurements were reconstructed from 720 and 1440 projections, respectively.

It is worth noting that with the 2D static methods the different time frames were not completely independent. While the PDFP step was done to each time frame separately, the sparsity level was updated based on the average sparsity level. However, when the initial sparsity levels were calculated, the differences between the time frames were minimal, in general less than the sparsity threshold. While this approach may cause non-optimal changes in the sparsity level for some time frames during the iteration, it should not have a noticeable effect on the convergence rate since the measurement data are relatively similar across time. If, for example, the number of projections between time steps was not uniform, a different approach could be more suitable.

\subsubsection{Parameters}
\begin{table}[b]
\caption{Fixed parameter values for the CWDS-PDFP algorithm~\ref{al:CWDS}.} 
\begin{tabular}{@{\;}lll@{\;}}
\specialrule{.1em}{.05em}{.05em} 
Parameter & Value & Notes \\ 
\hline
$\gamma$ & $1$ & See equation \eqref{eq:pdfp}. \\
$\lambda$ & $0.99$ & See equation \eqref{eq:pdfp}. \\
$\mathcal{I}$ & $300$ & Iteration limit. \\
$\delta_1$ & $0.01$ & Tolerance for sparsity level. \\
$\delta_2$ & $0.003$ & Tolerance for norm difference. \\
\specialrule{.1em}{.05em}{.05em} 
\end{tabular}
\label{tab:fixed}
\end{table}
The PDFP algorithm with controlled sparsity requires some pre-set parameters, most of which are kept fixed throughout the experiments. These are collected in table~\ref{tab:fixed}. Other parameters, instead, depend on the data and the deployed operators, such as the \emph{a priori} sparsity level $C_\text{pr}$ or the parameter $\omega$ used to tune the initial speed of the sparsity controller. In addition, there are the threshold $\kappa$ for calculating the sparsity level at each iteration and the parameter $\zeta$ used to initialize the thresholding value $\alpha^{(0)}$. For each data set, these varying parameters are shown in table \ref{tab:param}.

\begin{table}[ht]
\caption{Varying parameter values for the CWDS-PDFP algorithm~\ref{al:CWDS}.} 
\begin{tabular}{@{\;}llllll@{\;}}
\specialrule{.1em}{.05em}{.05em} 
Dataset & Method & $C_\text{pr}$ & $\omega$ & $\kappa$ & $\zeta$ \\ 
\hline
 & Haar wavelet & $0.30$ & $10$ & $10^{-6}$ & $1$ \\ 
Digital & 2D shearlet & $0.77$ & $50$ & $10^{-5}$ & $1$ \\
phantom & 3D shearlet & $0.73$ & $10$ & $10^{-6}$ & $1$ \\ \hline
 & Haar wavelet & $0.54$ & $0.5$ & $10^{-6}$ & $1$ \\
Gel & 2D shearlet & $0.82$ & $100$ & $10^{-6}$ & $1$ \\
phantom & 3D shearlet & $0.76$ & $10$ & $10^{-6}$ & $10$ \\ \hline
 & Haar wavelet & $0.45$ & $0.05$ & $10^{-6}$ & $2$ \\
Real plant & 2D shearlet & $0.94$ & $5$ & $10^{-5}$ & $10$ \\
 & 3D shearlet & $0.95$ & $10$ & $10^{-5}$ & $100$ \\
\specialrule{.1em}{.05em}{.05em} 
\end{tabular}
\label{tab:param}
\end{table}

\subsubsection{Similarity measures}
For the assessment of the image quality, we are using several quantitative metrics. Besides the $\ell_2$ relative error given by:
\[
\frac{\| \widetilde{\vec{f}} - \vec{f} \|_2}{\| \vec{f} \|_2}
\]
where $\vec{f}$ denotes the reference image and $\widetilde{\vec{f}}$ a reconstruction, we considered the peak signal-to-noise ratio (PSNR) and the Haar wavelet-based perceptual similarity index (HPSI), recently introduced in~\cite{reisenhofer2018}, which should give a more natural assessment of the similarity between two images. Nonetheless, we stress that, from the perspective of the end-user, the emphasis is almost purely on the visual quality, especially on how well the contrast agent can be spotted.

\subsection{Results}
\label{sec:results}
In the following, we present and discuss our numerical experiments. Computations were implemented with Matlab R2018a, running on a Windows 10 computer with 16GB of 2.40 GHz DDR4 memory and Intel i5 CPU at 2.80 GHz.

\subsubsection{Digital plant stem phantom}

Image quality metrics are reported in table~\ref{tab:brightsimu}. Notice that, unlike for the measured data where we use the dense FBP reconstruction as reference image, to compute the metrics in the simulated case we use the ground-truth. A visualization of the reconstruction quality for the last frame ($\tau=T=34$) and for different amount of projections ($P=45, 90$ and 360) is given in figure~\ref{fig:brightsimu}, where for better visualization we omitted the reconstructions from 120 projections.

\begin{table}[htb]
\caption{Error metrics of the digital phantom reconstructions. Each frame is compared against the known ground-truth. Reconstructions are shown in figure~\ref{fig:brightsimu}.} \label{tab:brightsimu}
\begin{tabular}{@{}llllll}
\specialrule{.1em}{.05em}{.05em} 
Method & $P$ & $\ell^2$-error & PSNR & HPSI \\ 
\hline
FBP & 45 & 38.6\% & 18.6 & 0.278 \\
 & 90 & 30.5\% & 20.7 & 0.433 \\
 & 120 & 27.4\% & 21.6 & 0.478 \\
 & 360 & 21.9\% & 23.6 & 0.575 \\ \hline
Haar CWDS-PDFP & 45 & \HL{26.5\%} & \HL{21.9} & \HL{0.368} \\
 & 90 & 22.4\% & 23.3 & 0.482 \\
 & 120 & 23.5\% & 22.9 & 0.520 \\
 & 360 & 23.9\% & 22.8 & 0.584 \\ \hline
SH$_\text{2D}$ CWDS-PDFP & 45 & 33.0\% & 20.0 & 0.366 \\
 & 90 & 30.6\% & 20.7 & 0.447 \\
 & 120 & 23.1\% & 23.1 & \HL{0.548} \\
 & 360 & 24.7\% & 22.5 & 0.584 \\ \hline
SH$_\text{3D}$ CWDS-PDFP & 45 & 28.7\% & 21.2 & 0.366 \\
 & 90 & \HL{23.6\%} & \HL{22.9} & \HL{0.500} \\
 & 120 & \HL{22.8\%} & \HL{23.2} & 0.541 \\
 & 360 & \HL{21.4\%} & \HL{23.7} & \HL{0.610} \\
\specialrule{.1em}{.05em}{.05em} 
\end{tabular}
\end{table}

\begin{figure}[!ht]
\begin{tabular}{r@{\hskip 3pt}ccc}
& 45 projections & 90 projections & 360 projections \\
\rotatebox[origin=l]{90}{\hspace{0.12\columnwidth} FBP}
& \includegraphics[width=0.3\columnwidth]{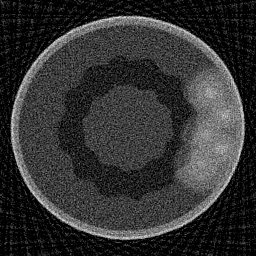}
& \includegraphics[width=0.3\columnwidth]{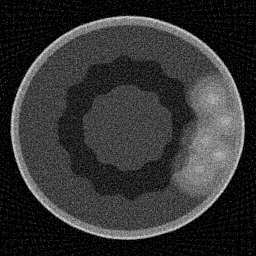}
& \includegraphics[width=0.3\columnwidth]{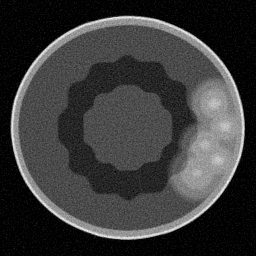} \\[6pt]
\rotatebox[origin=l]{90}{\hspace{0.08\columnwidth} Haar wavelet}
& \includegraphics[width=0.3\columnwidth]{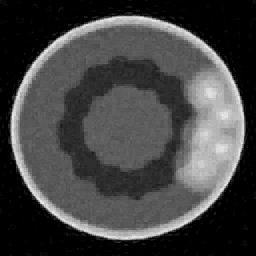}
& \includegraphics[width=0.3\columnwidth]{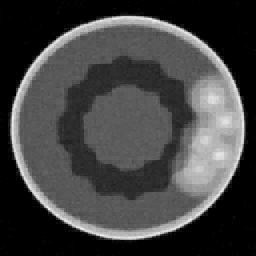}
& \includegraphics[width=0.3\columnwidth]{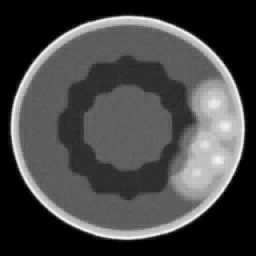} \\[6pt]
\rotatebox[origin=l]{90}{\hspace{0.09\columnwidth} 2D shearlet}
&\includegraphics[width=0.3\columnwidth]{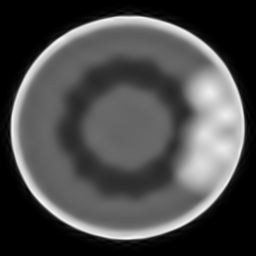}
&\includegraphics[width=0.3\columnwidth]{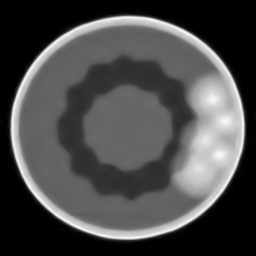}
&\includegraphics[width=0.3\columnwidth]{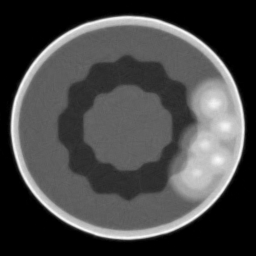} \\[6pt]
\rotatebox[origin=l]{90}{\hspace{0.09\columnwidth} 3D shearlet}
&\includegraphics[width=0.3\columnwidth]{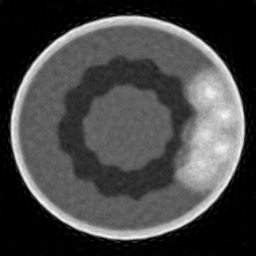}
&\includegraphics[width=0.3\columnwidth]{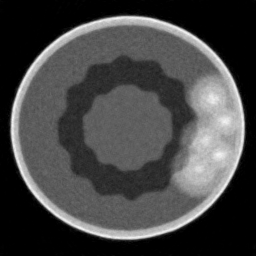}
&\includegraphics[width=0.3\columnwidth]{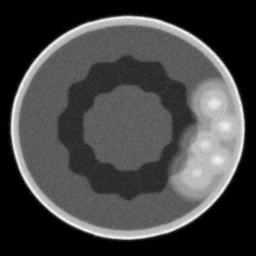}
\end{tabular}
\caption{Digital phantom reconstructed with different methods, for time frame $T = 34$. On each row, reconstructions from $P=45$ (leftmost), $P=90$ (middle) and $P=360$ (rightmost).} \label{fig:brightsimu}
\end{figure}

Compared to measured data, simulated data were designed to have significantly brighter contrast agent: as it is evident in figure~\ref{fig:brightsimu}, all the approaches provide clear reconstructions of the static structure and pick up the information on the spreading iodine. The decrease in quality with the lowest number of projections ($P=45$, rightmost column) is barely noticeable. With the highest number of projections ($P=360$, leftmost column) all methods provide clear reconstructions: the FBP reconstruction is quite noisy, but the iodine is clearly visible. With less projections ($P=90$, middle column) the static 2D shearlet approach provides blurry reconstructions and in the FBP reconstructions the noise becomes even more evident: the iodine is visible, but the spreading points are not easy to spot against the background. The static 2D wavelets and the proposed dynamic approach with 3D shearlets provide comparable reconstructions, even though the proposed approach is less jagged.

In the lowest number of projections case ($P=45$, rightmost column), FBP is clearly too noisy and the static 2D shearlet too blurry, even though the iodine is still clearly visible. Static 2D wavelets provide better results metrics-wise but with the proposed approach the blockiness is less prominent, which makes it visually more appealing especially where the iodine is spreading. Similar considerations can be done on the other time frames reconstructions, which are not reported here for brevity.

\subsubsection{Gel phantom}
Image quality metrics are reported in table~\ref{tab:5pt}, while number of iterations and computing times for each reconstruction method are included in table~\ref{tab:timetable}. 
A visualization of the reconstruction quality for different time frames ($\tau=2, 7$ and $12$) is given in figures~\ref{fig:5pt90angles} and~\ref{fig:5pt30angles} corresponding to $P=90$ and $P=30$ projection views, respectively. After downsampling of the data, each time frame has size $256 \times 256$. In both figures, the yellow color marks the iodine. Also, the first row shows the FBP reconstructions obtained from dense sampling ($P=360$): these were used to compute the quantitative metrics and as a visual benchmark. The ``doubled'' reconstructions coming from the expanded temporal dimension with the proposed 3D shearlet approach were combined together by taking their average. 

Compared to the simulated data, the physical phantom is more challenging: while the contrast agent is easily seen in the areas initially filled with sucrose, the spreading into the gel body is more intricate. Especially with a low number of projections ($P=30$) the bright yellow parts get smeared, indicating a greater amount of iodine than actually present. Indeed, the reconstructions have more background noise and especially the later frames have a lot of artifacts making it hard to differentiate between errors and actual contrast agent outside of the sucrose holes.

\begin{table}[h]
\caption{Error metrics of the gel phantom. Each frame is compared against the dense angle FBP reconstruction. Reconstructions are shown in figures~\ref{fig:5pt90angles} and~\ref{fig:5pt30angles}.} \label{tab:5pt}
\begin{tabular}{@{}llllll}
\specialrule{.1em}{.05em}{.05em} 
Method & $P$ & $\ell^2$-error & PSNR & HPSI \\ 
\hline
Haar CWDS-PDFP & 30 & 15.3\% & 23.4 & \HL{0.535} \\
 & 45 & 15.3\% & 23.4 & \HL{0.592} \\
 & 90 & \HL{14.4\%} & \HL{23.9} & \HL{0.614} \\
 & 120 & \HL{14.2\%} & \HL{24.1} & \HL{0.610} \\
 & 360 & \HL{14.5\%} & \HL{23.9} & \HL{0.530} \\ \hline
SH$_\text{2D}$ CWDS-PDFP & 30 & 15.5\% & 23.3 & 0.408 \\
 & 45 & 15.3\% & 23.4 & 0.423 \\
 & 90 & 15.1\% & 23.5 & 0.444 \\
 & 120 & 15.0\% & 23.6 & 0.453 \\
 & 360 & 14.8\% & 23.7 & 0.488 \\ \hline
SH$_\text{3D}$ CWDS-PDFP & 30 & \HL{15.0\%} & \HL{23.6} & 0.488 \\
 & 45 & \HL{14.9\%} & \HL{23.6} & 0.516 \\
 & 90 & 14.7\% & 23.7 & 0.525 \\
 & 120 & 14.7\% & 23.7 & 0.512 \\
 & 360 & 14.6\% & 23.8 & 0.525 \\
\specialrule{.1em}{.05em}{.05em} 
\end{tabular}
\end{table}

\begin{figure}[!htb]
\vspace{-3pt}
\begin{tabular}{r@{\hskip 3pt}ccc@{\hskip 3pt}l}
 & $\tau = 2 \sim 23$ min & $\tau = 7 \sim 72$ min & $\tau = 12 \sim 121$ min & \\

\rotatebox[origin=l]{90}{\hspace{0.12\columnwidth} FBP}
& \includegraphics[width=0.3\columnwidth]{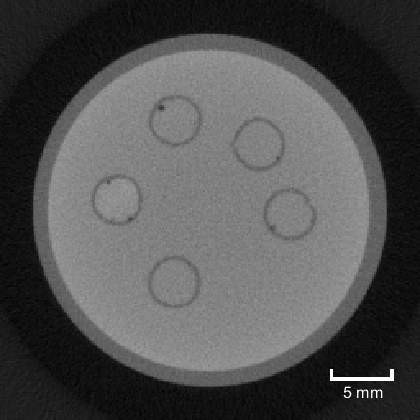}
& \includegraphics[width=0.3\columnwidth]{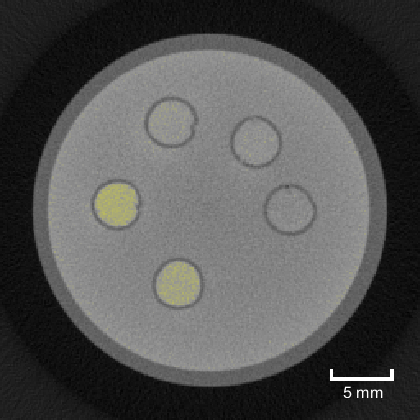}
& \includegraphics[width=0.3\columnwidth]{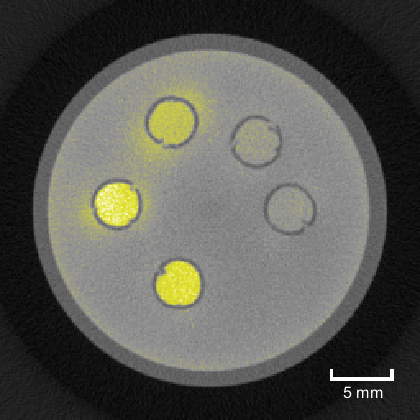}
& \includegraphics[height=0.30\columnwidth]{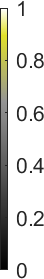} \\[-2pt]
& (a) & (b) & (c) & \\[3pt]
\rotatebox[origin=l]{90}{\hspace{0.08\columnwidth} Haar wavelet}
& \includegraphics[width=0.3\columnwidth]{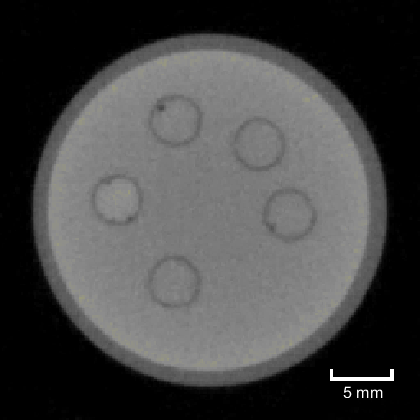}
& \includegraphics[width=0.3\columnwidth]{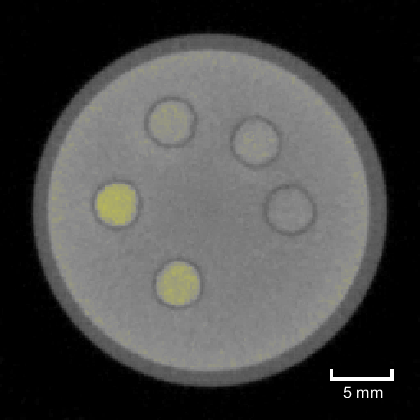}
& \includegraphics[width=0.3\columnwidth]{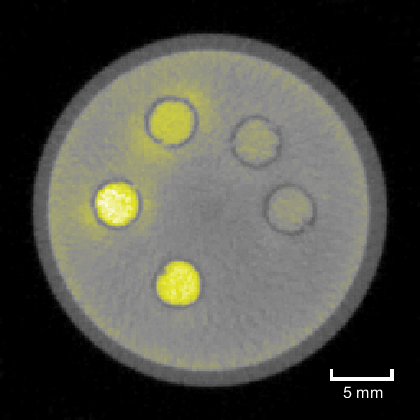}
& \includegraphics[height=0.30\columnwidth]{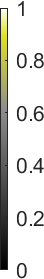} \\[-2pt]
& (d) & (e) & (f) & \\[3pt]
\rotatebox[origin=l]{90}{\hspace{0.09\columnwidth} 2D shearlet}
& \includegraphics[width=0.3\columnwidth]{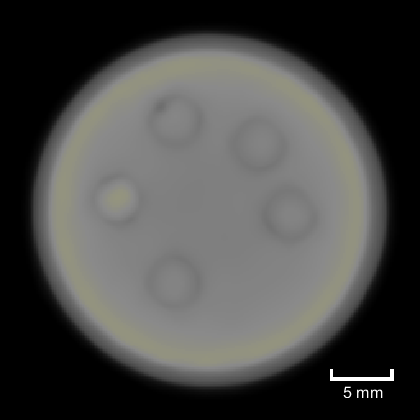}
& \includegraphics[width=0.3\columnwidth]{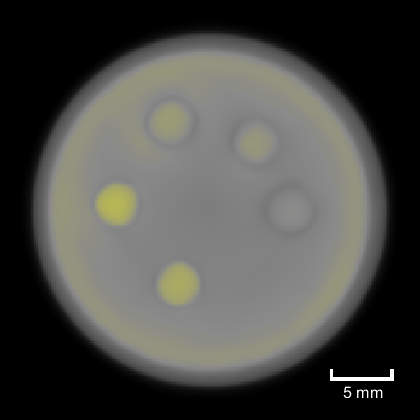}
& \includegraphics[width=0.3\columnwidth]{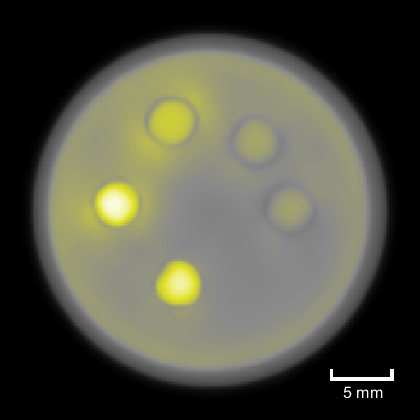}
& \includegraphics[height=0.30\columnwidth]{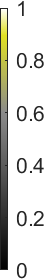} \\[-2pt]
& (e) & (g) & (h) & \\[3pt]
\rotatebox[origin=l]{90}{\hspace{0.09\columnwidth} 3D shearlet}
& \includegraphics[width=0.3\columnwidth]{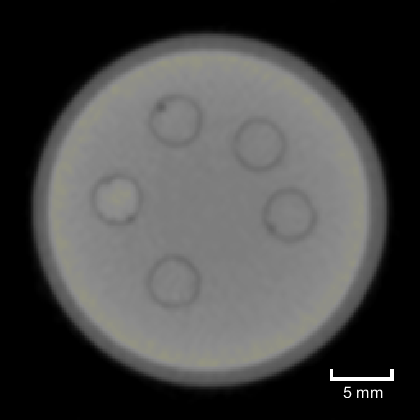}
& \includegraphics[width=0.3\columnwidth]{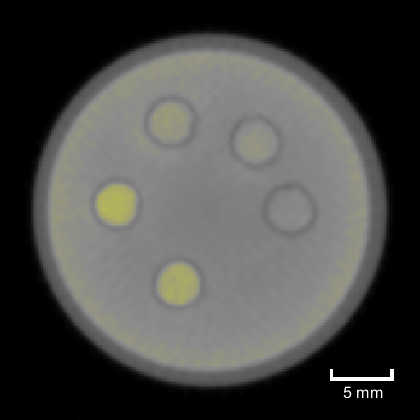}
& \includegraphics[width=0.3\columnwidth]{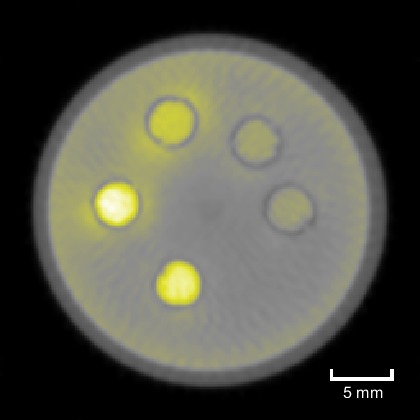}
& \includegraphics[height=0.30\columnwidth]{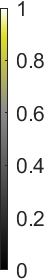} \\[-2pt]
& (i) & (j) & (k) & 
\end{tabular}
\vspace{-14.3pt}
\caption{Gel phantom reconstructed with different methods. First row $\big($(a)-(c)$\big)$: benchmark FBP reconstructions from $P = 360$. Other rows $\big($(d)-(l)$\big)$: $P = 90$. On each row, time frames $\tau=2$ (leftmost), $\tau=7$ (middle) and $\tau=12$ (rightmost).} 
\label{fig:5pt90angles}
\end{figure}

\begin{figure}[!htb]
\vspace{-3pt}
\begin{tabular}{r@{\hskip 3pt}ccc@{\hskip 3pt}l}
 & $\tau = 2 \sim 23$ min & $\tau = 7 \sim 72$ min & $\tau = 12 \sim 121$ min & \\

\rotatebox[origin=l]{90}{\hspace{0.12\columnwidth} FBP}
& \includegraphics[width=0.3\columnwidth]{figure3j.png}
& \includegraphics[width=0.3\columnwidth]{figure3k.png}
& \includegraphics[width=0.3\columnwidth]{figure3l.png}
& \includegraphics[height=0.30\columnwidth]{figure3l2.png} \\[-2pt]
& (a) & (b) & (c) & \\[3pt]
\rotatebox[origin=l]{90}{\hspace{0.08\columnwidth} Haar wavelet}
& \includegraphics[width=0.3\columnwidth]{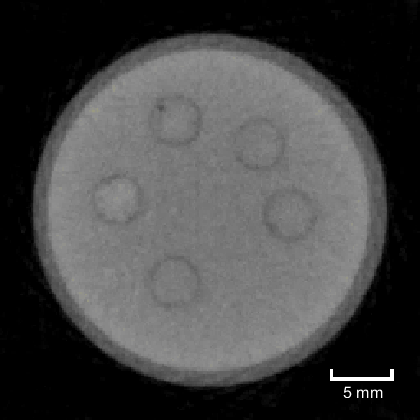}
& \includegraphics[width=0.3\columnwidth]{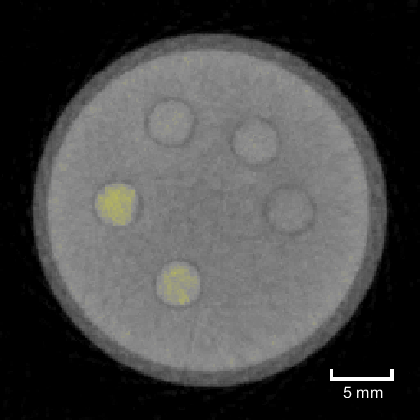}
& \includegraphics[width=0.3\columnwidth]{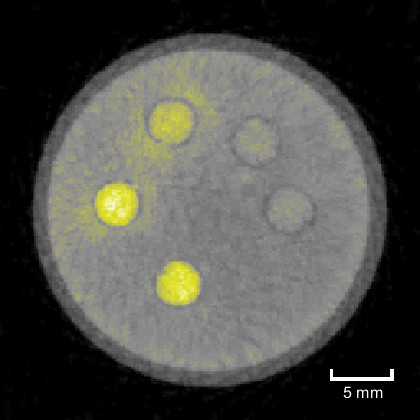}
& \includegraphics[height=0.30\columnwidth]{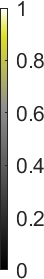} \\[-2pt]
& (d) & (e) & (f) & \\[3pt]
\rotatebox[origin=l]{90}{\hspace{0.09\columnwidth} 2D shearlet}
& \includegraphics[width=0.3\columnwidth]{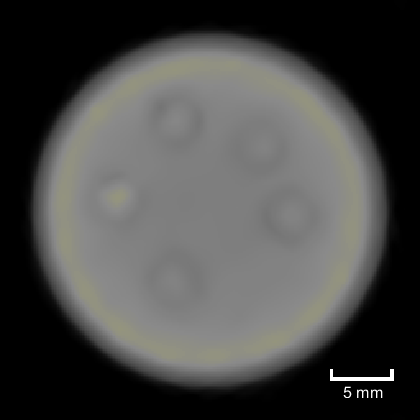}
& \includegraphics[width=0.3\columnwidth]{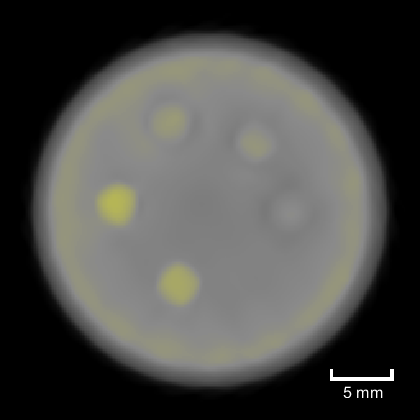}
& \includegraphics[width=0.3\columnwidth]{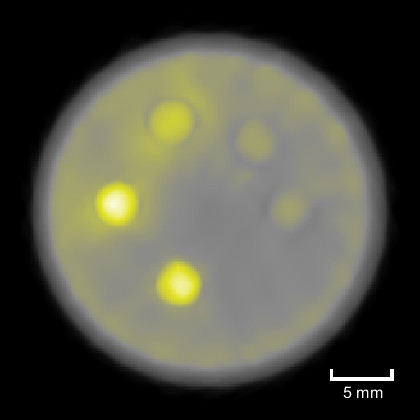}
& \includegraphics[height=0.30\columnwidth]{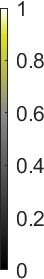} \\[-2pt]
& (e) & (g) & (h) & \\[3pt]
\rotatebox[origin=l]{90}{\hspace{0.09\columnwidth} 3D shearlet}
& \includegraphics[width=0.3\columnwidth]{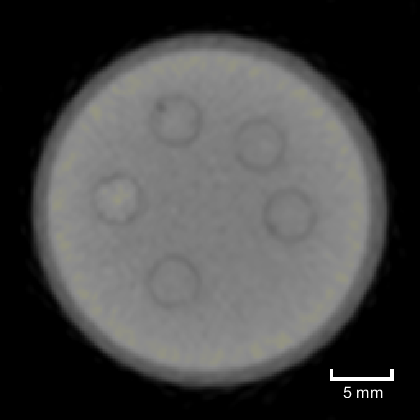}
& \includegraphics[width=0.3\columnwidth]{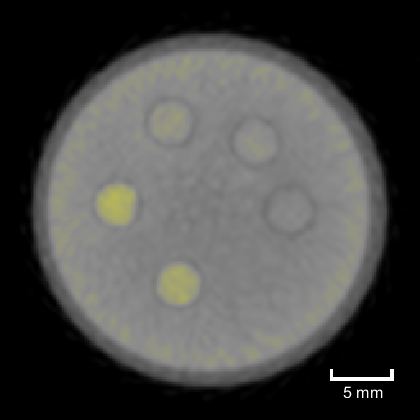}
& \includegraphics[width=0.3\columnwidth]{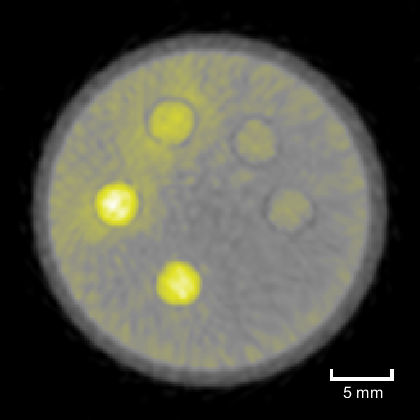}
& \includegraphics[height=0.30\columnwidth]{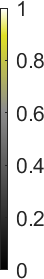} \\[-2pt]
& (i) & (j) & (k) & 
\end{tabular}
\vspace{-14.3pt}
\caption{Gel phantom reconstructed with different methods. First row $\big($(a)-(c)$\big)$: benchmark FBP reconstructions from $P = 360$. Other rows $\big($(d)-(l)$\big)$: $P = 30$. On each row, time frames $\tau=2$ (leftmost), $\tau=7$ (middle) and $\tau=12$ (rightmost).} 
\label{fig:5pt30angles}
\end{figure}

\clearpage
\begin{table}[htb]
\caption{Iterations and computing time required for each reconstruction method for the gel phantom. Reconstructions are shown in figures~\ref{fig:5pt90angles} and~\ref{fig:5pt30angles}.} \label{tab:timetable}
\begin{tabular}{@{}llllll}
\specialrule{.1em}{.05em}{.05em} 
Method & $T$ & $P$ & \begin{tabular}{@{}l@{}}Total \\ time (s)\end{tabular} & \begin{tabular}{@{}l@{}}Number of \\ iterations\end{tabular} & \begin{tabular}{@{}l@{}}Time per \\ iteration (s)\end{tabular} \\ 
\hline 
FBP & 17 & 360 & 14 & \ -- & \ -- \\ \hline
Haar CWDS-PDFP & 17 & 30 & 323 & 119 & 2.7 \\
 & & 45 & 464 & 116 & 4.0 \\
 & & 90 & 753 & 106 & 7.1 \\
 & & 120 & 917 & 99 & 9.3 \\
 & & 360 & 1434 & 53 & 26.9 \\ \hline
SH$_\text{2D}$ CWDS-PDFP & 17 & 30 & 379 & 45 & 8.4 \\
 & & 45 & 438 & 43 & 10.2 \\
 & & 90 & 582 & 45 & 12.9 \\
 & & 120 & 807 & 45 & 17.9 \\
 & & 360 & 1846 & 55 & 33.6 \\ \hline
SH$_\text{3D}$ CWDS-PDFP & 34 & 30 & 3100 & 53 & 58.5 \\
 & & 45 & 3510 & 55 & 63.8 \\
 & & 90 & 3630 & 56 & 64.8 \\
 & & 120 & 3554 & 50 & 71.1 \\
 & & 360 & 5689 & 54 & 105.3 \\
\specialrule{.1em}{.05em}{.05em} 
\end{tabular}
\end{table}

With $P=90$ projections, both static 2D and dynamic 3D shearlet methods produce reconstructions with less background noise, compared to FBP and static 2D wavelets, making it slightly easier to notice the leaking iodine. Between the two, static 2D shearlets provide a somewhat more misleading reconstruction since yellow, even though faded, is more evident in the gel body than with the proposed 3D shearlet approach. Beside static 2D shearlets, the other methods produce relatively little noise and, even with the highest amount of contrast agent, the artifacts are manageable. 

Overall, static 2D wavelets and the proposed dynamic 3D shearlets yields similar visual information about the spreading of iodine, while static 2D shearlets provide clearly poorer reconstructions given the general blurriness and the iodine smearing. Metrics-wise the proposed approach performs better with the least projections ($P=30, 45$), which is of most interest for the end-user since it results in a lesser amount of X-ray radiation for the target and lower scanning time.

\subsubsection{Living tree data}
In the experiments described in this section, we test the performance on the proposed method for data on an \textit{in vivo} -imaged plant. Compared to the previous ones, this is a remarkably more difficult task: the target is smaller, the concentration of iodine is much smaller and, most importantly, the longer time required for scanning the denser data for benchmark visualization might result in morphological changes in the sample owing to radiation damage. This means that the finer details of structures or even positions of individual cells can change over time, which was not the case for the gel phantom.

Image quality metrics are reported in table~\ref{tab:plant}, while the number of iterations and computing times of the full set of reconstructions for each method are included in table~\ref{tab:plantTimetable}. Higher quality FBP reconstructions from dense sampling are shown in figure~\ref{fig:plantFBP}: on the left, the reconstruction from the initial time frame ($P = 720$); on the right, the reconstruction of the last frames ($P = 1440$). These were used for visual comparison and for computing the mean quantitative metrics of the first and last frame reconstructed with the other methods. The values reported in table~\ref{tab:plant} are the mean of the metrics computed for the first and last frame with respect to the other methods.

A visualization of the reconstruction quality for different time frames ($\tau=1, 7$ and $11$) is given in figures~\ref{fig:plant90angles} and~\ref{fig:plant30angles} corresponding to $P=90$ and $P=30$ projection views, respectively. After downsampling of the data, each time frame has size $155 \times 155$, but the final reconstructions were cropped to size $128 \times 128$.  From the ``tripled'' reconstructions coming from the expanded temporal dimension with the proposed 3D shearlet approach, we chose the middle one for display and the rest were discarded. Combining three noisy images would have amplified the reconstructions errors where as with the two images of the gel phantom the difference was minor. In all the figures, the yellow color marks the iodine.

The real plant is a small target with relatively low concentration of contrast agent. This makes it difficult to obtain clear reconstructions: even with very dense projection sample, the background noise is extremely noticeable. Compared to the other data sets, the static 2D wavelet approach provides a poorer performance: while for the simulated data and the gel phantom reconstructions were comparable to the proposed approach, in this case the noise is dominant and even in the last frame it looks like there is almost no iodine. This is evident already with $P=90$ projections. The static 2D shearlet approach suffers from the same limitations observed before: reconstruction are too blurry, completely smoothing out the edges. As a result, the iodine smears to regions where it should not be present, as a comparison with the high resolution reconstructions in figure~\ref{fig:plantFBP} shows.

The proposed approach, instead, correctly displays the right amount of iodine in the right spots. This is confirmed also by the quantitative metrics: for $P=90$ projections, the proposed approach outperforms the others and for $P=30$ the values are quite close to the ones of static 2D shearlets, which in contrast give a very poor reconstruction.  Keeping in mind that all the challenges that the living tree data set poses, the reconstruction quality with the proposed 3D shearlet approach is noteworthy.

\begin{figure}[!htb]
\begin{tabular}{r@{\hskip 3pt}cc@{\hskip 3pt}l}
& $\tau = 1 \sim 0$ min & $\tau = 11 \sim 235$ min & \\
\rotatebox[origin=l]{90}{\hspace{0.19\columnwidth} FBP}
& \includegraphics[width=0.45\columnwidth]{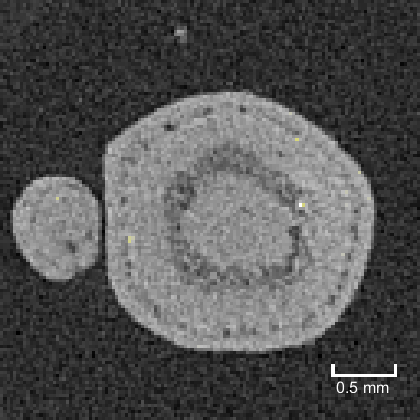}
& \includegraphics[width=0.45\columnwidth]{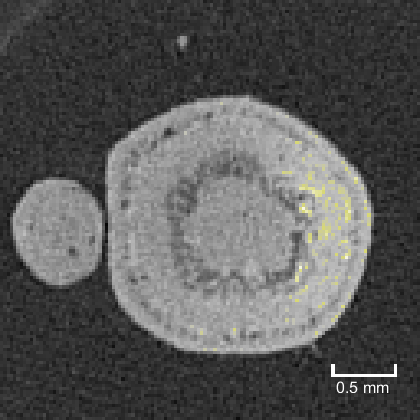}
& \includegraphics[height=0.45\columnwidth]{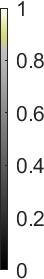} \\[-2pt]
& (a) & (b) &
\end{tabular}
\caption{The FBP reconstructions of the living tree used for comparison, first frame (a): $P = 720$ and last frame (b): $P = 1440$.} \label{fig:plantFBP}
\end{figure}
\begin{table}[!bth]
\caption{Error metrics of the living tree reconstructions. All metrics are the mean of the first and last frames compared against the corresponding dense angle FBP reconstruction. Reconstructions are shown in figures~~\ref{fig:plant90angles} and~\ref{fig:plant30angles}.} \label{tab:plant}
\begin{tabular}{@{}llllll}
\specialrule{.1em}{.05em}{.05em} 
Method & $P$ & $\ell^2$-error & PSNR & HPSI \\
\hline
Haar CWDS-PDFP & 30 & 45.7\% & 15.4 & 0.407 \\
 & 45 & 46.2\% & 15.3 & 0.407 \\
 & 90 & 44.2\% & 15.7 & 0.439 \\ \hline
SH$_\text{2D}$ CWDS-PDFP & 30 & \HL{43.7\%} & \HL{15.8} & \HL{0.415} \\
 & 45 & \HL{43.9\%} & \HL{15.7} & \HL{0.436} \\
 & 90 & 43.7\% & 15.8 & 0.452 \\ \hline
SH$_\text{3D}$ CWDS-PDFP & 30 & 45.2\% & 15.5 & 0.393 \\
 & 45 & 45.2\% & 15.5 & 0.411 \\
 & 90 & \HL{43.7\%} & \HL{15.7} & \HL{0.439} \\
\specialrule{.1em}{.05em}{.05em} 
\end{tabular}
\end{table}

\begin{figure}[tbh!]
\begin{tabular}{r@{\hskip 3pt}ccc@{\hskip 3pt}l}
& $\tau = 1 \sim 0$ min & $\tau = 7 \sim 166$ min & $\tau = 11 \sim 235$ min & \\
\rotatebox[origin=l]{90}{\hspace{0.08\columnwidth} Haar wavelet}
& \includegraphics[width=0.3\columnwidth]{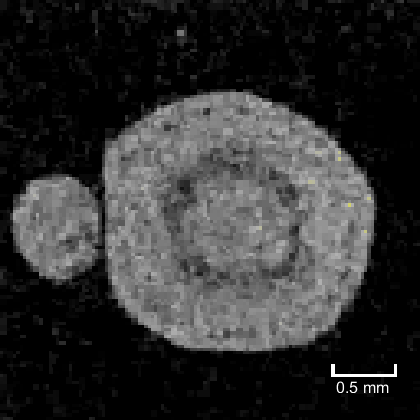}
& \includegraphics[width=0.3\columnwidth]{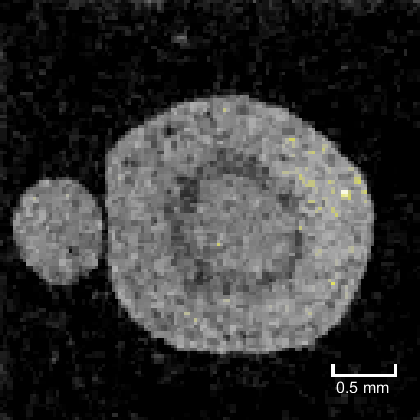}
& \includegraphics[width=0.3\columnwidth]{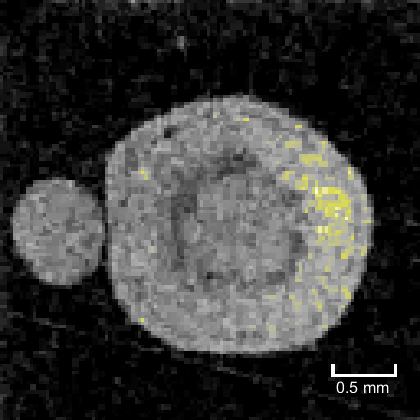}
& \includegraphics[height=0.30\columnwidth]{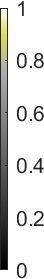} \\[6pt]
\rotatebox[origin=l]{90}{\hspace{0.09\columnwidth} 2D shearlet}
& \includegraphics[width=0.3\columnwidth]{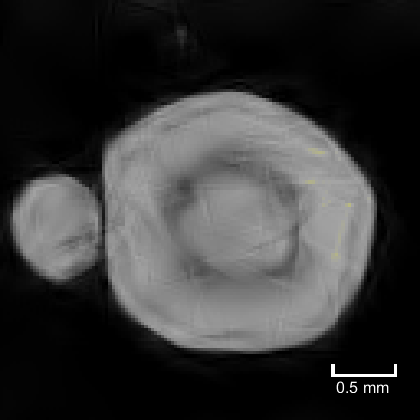}
& \includegraphics[width=0.3\columnwidth]{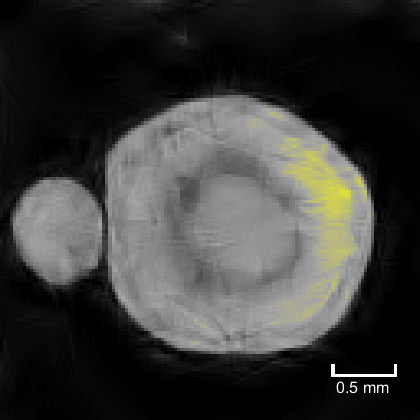}
& \includegraphics[width=0.3\columnwidth]{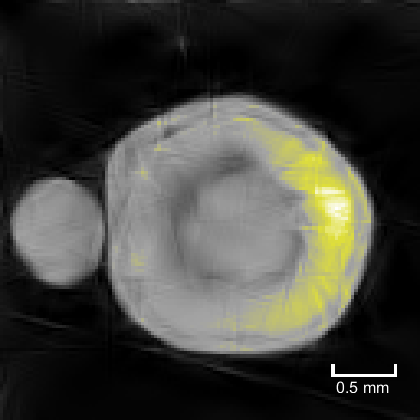}
& \includegraphics[height=0.30\columnwidth]{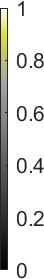} \\[6pt]
\rotatebox[origin=l]{90}{\hspace{0.09\columnwidth} 3D shearlet}
& \includegraphics[width=0.3\columnwidth]{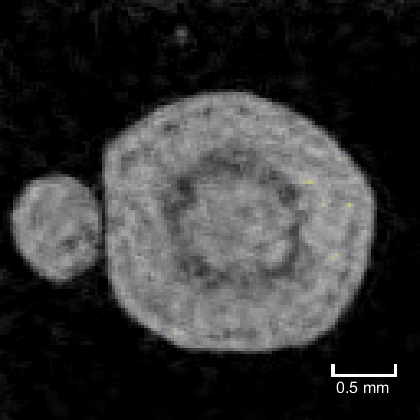}
& \includegraphics[width=0.3\columnwidth]{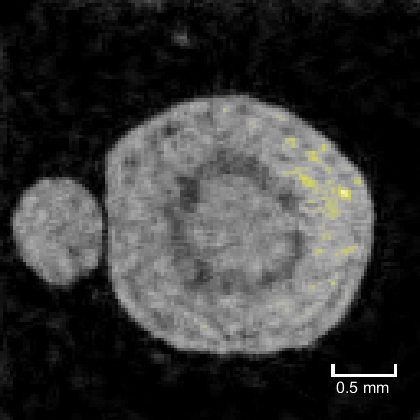}
& \includegraphics[width=0.3\columnwidth]{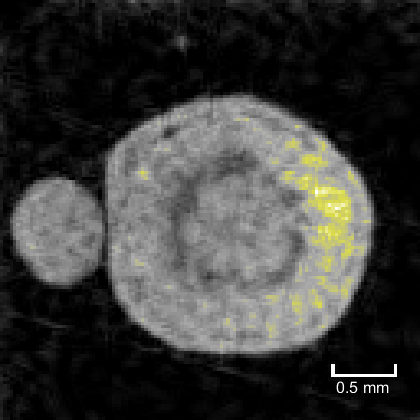}
& \includegraphics[height=0.30\columnwidth]{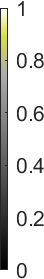}
\end{tabular}
\caption{Living tree reconstructed with different methods and $P = 90$.  On each row, time frames $\tau=1$ (leftmost), $\tau=7$ (middle) and $\tau=11$ (rightmost).} 
\label{fig:plant90angles}
\end{figure}
\begin{figure}[tbh!]
\begin{tabular}{r@{\hskip 3pt}ccc@{\hskip 3pt}l}
& $\tau = 1 \sim 0$ min & $\tau = 7 \sim 166$ min & $\tau = 11 \sim 235$ min & \\
\rotatebox[origin=l]{90}{\hspace{0.08\columnwidth} Haar wavelet}
& \includegraphics[width=0.3\columnwidth]{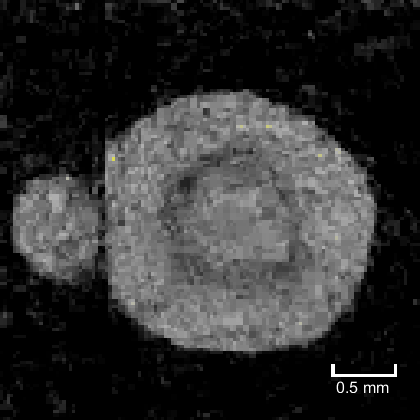}
& \includegraphics[width=0.3\columnwidth]{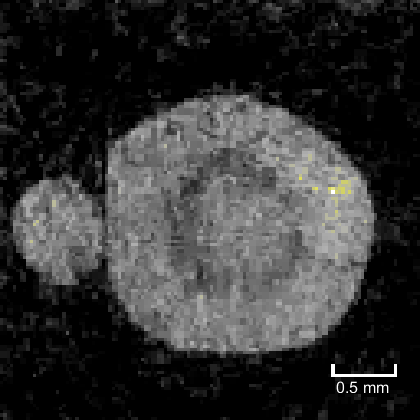}
& \includegraphics[width=0.3\columnwidth]{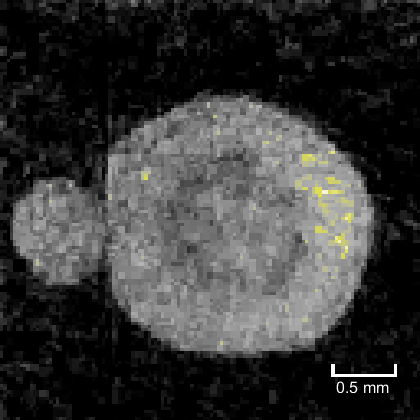}
& \includegraphics[height=0.30\columnwidth]{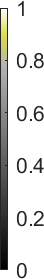} \\[6pt]
\rotatebox[origin=l]{90}{\hspace{0.09\columnwidth} 2D shearlet}
& \includegraphics[width=0.3\columnwidth]{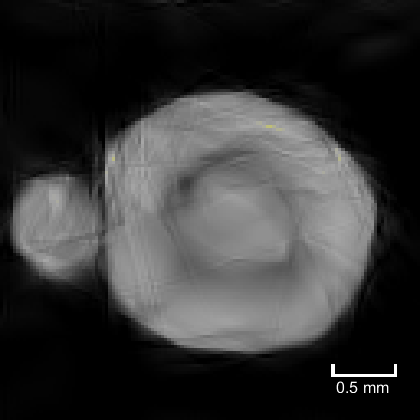}
& \includegraphics[width=0.3\columnwidth]{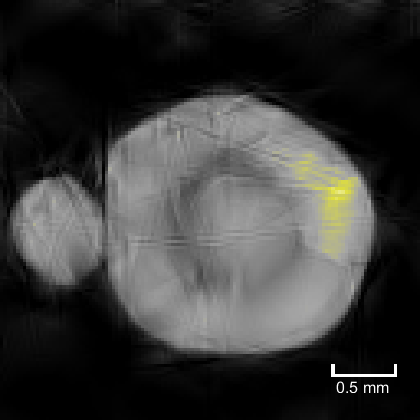}
& \includegraphics[width=0.3\columnwidth]{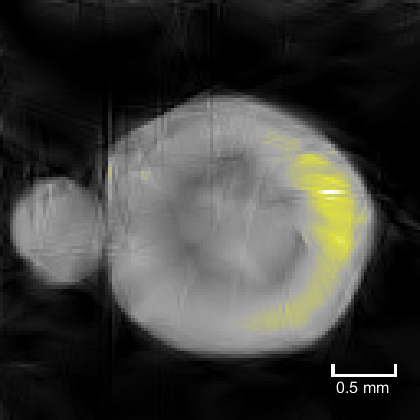}
& \includegraphics[height=0.30\columnwidth]{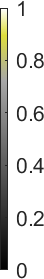} \\[6pt]
\rotatebox[origin=l]{90}{\hspace{0.09\columnwidth} 3D shearlet}
& \includegraphics[width=0.3\columnwidth]{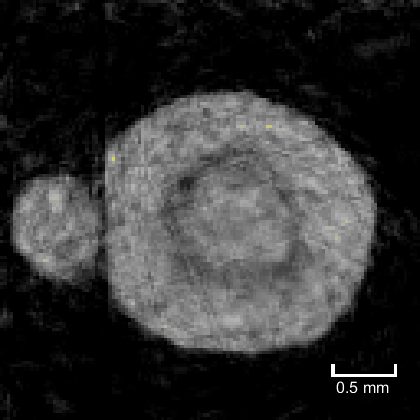}
& \includegraphics[width=0.3\columnwidth]{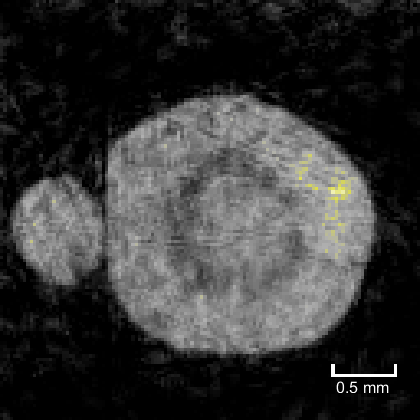}
& \includegraphics[width=0.3\columnwidth]{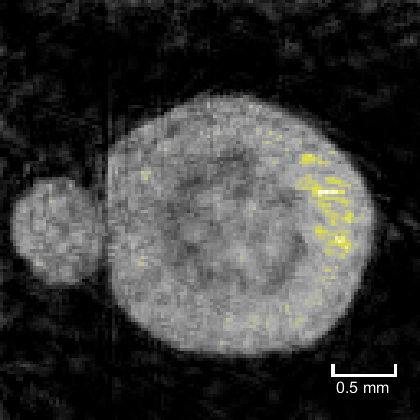}
& \includegraphics[height=0.30\columnwidth]{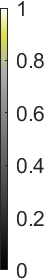}
\end{tabular}
\caption{Living tree reconstructed with different methods and $P = 30$.  On each row, time frames $\tau=1$ (leftmost), $\tau=7$ (middle) and $\tau=11$ (rightmost).} \label{fig:plant30angles}
\end{figure}
\begin{table}[!htb]
\caption{Iterations and computing time required for each reconstruction of the living tree. Note that sufficient number of projections for FBP was only available for the first and last frame. Reconstructions are shown in figures~~\ref{fig:plant90angles} and~\ref{fig:plant30angles}.} \label{tab:plantTimetable}
\begin{tabular}{@{}llllll}
\specialrule{.1em}{.05em}{.05em} 
Method & $T$ & $P$ & \begin{tabular}{@{}l@{}}Total \\ time (s)\end{tabular} & \begin{tabular}{@{}l@{}}Number of \\ iterations\end{tabular} & \begin{tabular}{@{}l@{}}Time per \\ iteration (s)\end{tabular} \\
\hline
 FBP & 1$^*$ & 720 & 0.2 & \ -- & \ -- \\
 & & 1440 & 0.2 & \ -- & \ -- \\ \hline
Haar CWDS-PDFP & 11 & 30 & 58 & 73  & 0.8 \\
 & & 45 & 90 & 85  & 1.1 \\
 & & 90 & 182 & 100  & 1.8 \\ \hline
SH$_\text{2D}$ CWDS-PDFP & 11 & 30 & 207 & 101 & 2.1 \\
 & & 45 & 259 & 107 & 2.4 \\
 & & 90 & 359 & 114  & 3.1 \\ \hline
SH$_\text{3D}$ CWDS-PDFP & 33 & 30 & 1287 & 62  & 20.7 \\
 & & 45 & 1464 & 68  & 21.5 \\
 & & 90 & 1489 & 75  & 19.8 \\
\specialrule{.1em}{.05em}{.05em}
\end{tabular}

{\small $^*$ Separate reconstructions of the first and last time frame only.}
\end{table}

\section{Conclusion}
\label{sec:concl}
In the present paper, we introduced a reconstruction method for sparse dynamic tomography.  The aim was to illustrate that taking into account the ongoing motion, rather than reconstructing each time frame separately, allows for better quality reconstructions, especially when only few projection angles are considered.  The forward problem is modeled by a time-dependent Radon transform that is well-defined in a finite time setting and the motion is tracked down by using 3D shearlets, where shearlets in the third dimension are deployed to connect the different time instances. We have shown that such a motion-aware model admits minimizers, which can be attained with modern primal-dual techniques combined with a sparsity driven rule for an automatic choice of the regularization parameter. 

The proposed model has been applied to simulated and real phantoms with undersampled data. The need for scarcely sampled data comes from the imaging of phloem transport in plants: phloem transport is inherently a dynamic process and full microCT scans are too slow to caption the movement of iodine in tissues and may harm the the plant, leading to distortion and artifacts in the reconstruction. However, distinction of finer structures needs higher resolution for accurate reconstructions, which justifies the use of slower but higher resolution equipment.

Overall, our experiments indicate that while there are clear differences between the three tested data sets, the results seem consistent: the dynamic 3D shearlets approach and the static 2D wavelets strategy provide good reconstructions in the simulated and the gel phantom cases, while the static 2D shearlets blur out most of the details for all three data sets. For the living tree case, the proposed dynamic 3D shearlets approach is the only one picking up the right information on the iodine spreading. Especially with the lowest number of projections ($P=30, 45$), the static approach starts to break down (wavelets cannot suppress noise, 2D shearlets are too blurry, smearing the iodine information with noise), while the proposed approach is robust also in the limit case.

When the computational burden is considered, it is clear that the 3D shearlets are noticeably slower to compute making our approach the more computationally demanding. The 2D shearlets are only slightly slower than wavelets, however in practise the wavelet based algorithm provided somewhat more robust process with different starting parameters while the shearlets often required a bit more tuning. 

More importantly, the results of these methods seem very promising to develop microCT as a tool to study phloem transport using iodine tracer. In particular the 3D shearlet and Haar wavelet seem to offer a reasonable signal to noise ratio. The 90 angles scans in the gel phantom and living tree experiments capture well the time dynamic of the tracer appearance in the samples and its spatial distribution,  which are critical to study phloem transport. Furthermore, with a measurement time of three minutes, we have the potential to capture events with a time resolution compatible with plant physiological processes.

For future studies the following improvements to the data sets should be considered: for simulated data each individual sinogram was generated from a static image and hence each projection is acquired at the same instant. This is of course false with the real measurements where the target can change during the measurement process. This could be improved by simulating only a portion of the projections and then slightly rotating (or changing features in) the phantom.

The real data sets might not accurately represent the effect of lower number of projections. Since these are obtained by downsampling a denser set of measurements, the total measurement time stays constant which is not desirable. However, obtaining separate but comparable measurements for each number of projections is difficult to replicate consistently, especially with real plants.

Finally, 4D reconstructions (3 spatial dimensions + time) would be of great importance to the end-user. While continuous shearlets can be extended to any dimension~\cite{Dahlke2010}, 4D discrete shearlets are not currently provided by any software package. Possible solutions would be to consider only the 3 spatial dimensions and to include, for example, a 1D wavelet transform in the time domain to obtain a 4D representation.

\section*{Acknowledgments}
TAB and SS acknowledge support by the Academy of Finland through the Finnish Centre of Excellence in Inverse Modelling and Imaging 2018-2025, decision number 312339, and Project 310822. SH, HH and TH acknowledge support by the Academy of Finland, Project 295696. YS acknowledges support by the Academy of Finland, decision number 323843 and 312571 and University of Helsinki, Faculty of Science ATMATH project.

The authors wish to thank Teemu H\"{o}ltt\"{a} and Luca Ratti for helpful discussions and Heikki Suhonen for recovering information about the data sets.

\bibliographystyle{amsplain}
\bibliography{main}

\end{document}